\documentclass{iopart}

\makeatletter
\renewcommand{\tableofcontents}{\section*{\contentsname} \@starttoc{toc}}
\makeatother
\expandafter\let\csname equation*\endcsname\reax
\expandafter\let\csname endequation*\endcsname\relax

\usepackage{bbm, amsmath, amssymb, amsthm, bm,textcomp, enumitem}%,nicefrac
\usepackage[utf8]{inputenc}
\usepackage[german,dutch,english]{babel}
\usepackage{graphicx}

  \theoremstyle{definition}
  \newtheorem{WoDef}{Working definition}
  \newtheorem{Def}{Definition}
  \newtheorem{Exp}{Example}
\theoremstyle{theorem}
\newtheorem{Pro}{Proposition}

\setlist[enumerate]{leftmargin=*,label=(\roman*)}

\begin{document}
  \paper{Measures of macroscopicity for quantum spin systems}

  \author{Florian Fr\"owis and Wolfgang D\"ur}
  \address{Institut f\"ur Theoretische Physik, Universit\"at
    Innsbruck, Technikerstr. 25, A-6020 Innsbruck,  Austria}
  \ead{florian.froewis@uibk.ac.at}
\date{\today}

  \begin{abstract}
    We investigate the notion of ``macroscopicity'' in the case of quantum spin systems and provide two main results. First, we motivate the quantum Fisher information as a measure for the macroscopicity of quantum states. Second, we compare the existing literature of this topic. We report on a hierarchy among the measures and we conclude that one should carefully distinguish between ``macroscopic quantum states'' and ``macroscopic superpositions'', which is a strict subclass of the former.
  \end{abstract}

  \pacs{03.65.-w,03.67.Mn,03.65.Ud }
  % 03.65.-w 	Quantum mechanics
  % 03.67.Mn 	Entanglement measures, witnesses, and other characterizations
  % 03.65.Ud 	Entanglement and quantum nonlocality (e.g. EPR paradox, Bell's inequalities, GHZ states, etc.) 
 \tableofcontents

 \maketitle

  % ---------------------------------------------------
  \section{Introduction}
  \label{sec:introduction}

  Quantum mechanics is maybe the most successful and fascinating physical theory of the last century. In the first place, it provides a deep understanding of atoms and their interaction with light. In recent years, the promise for improved technologies is intensively investigated. This success comes with the price of a difficult interpretation of the theory, which was lively discussed from a philosophical point of view. As long as we consider only microscopic systems on the scale of an atomic radius, objections to quantum mechanics are nevertheless rare, ultimately because of the overwhelming experimental evidence. When it comes to macroscopic systems, many things are not clear any more. Already in 1935, Schr\"odinger pointed out in his seminal paper \cite{Schroedinger35} that quantum mechanics in principle allows superpositions of macroscopic states, like a cat that is alive and dead at the same time. This gedankenexperiment is deeply connected to the so-called quantum measurement problem and was discussed by generations of physicists.
   
  Besides the interest in the foundations of quantum mechanics, the question of macroscopic quantum mechanics also has practical aspects. Proposed architectures for quantum computers are based on a large number of qubits. Computational tasks that can overcome classical algorithms may require long-range quantum correlations \cite{QComp,QComp2}. A further application, quantum metrology, uses a certain kind of multipartite entanglement among many particles for an increased sensitivity in phase estimation protocols \cite{FisherEntanglement}.

  In 1980, Leggett \cite{Leg80} gave an important impulse to the topic of macroscopic quantum mechanics. He asked for a clear definition of the phrases ``macroscopic quantum phenomenon'' and ``macroscopic superposition''. He realised that one should distinguish between quantum effects that are originated on a microscopic level from ``true'' macroscopic quantum effects. Among other examples, he highlights the specific heat of insulators. Classical statistical mechanics predicts a specific heat that is constant with respect to the temperature $T$. On the other side, the quantum mechanical Debye model correctly predicts a $T^3$ behaviour for small temperatures. Many physicists considered this as an example of a macroscopic quantum phenomenon, since this law is valid even for large insulators. Leggett argued that the phrase macroscopic in this context is not justified, because the interactions that cause this effect are on an atomic scale and thus microscopic. In the following, he demanded a distinction between classical and microscopic quantum effects on the one side, and macroscopic quantum effects on the other side. Only the latter ones allow to verify quantum mechanics (against classical theories) on a macroscopic scale, as in the example of Schr\"odinger's cat.

  Consequently, we call quantum states that are capable to induce macroscopic quantum effects ``macroscopic quantum states''. The question at issue is which properties of a many-body quantum state are appropriate for such a characterisation. It is clear that the number of particles is an important but not sufficient criterion. If we consider superpositions of semi-classical quantum states, it seems to be crucial that they are ``macroscopically distinct'', as Leggett \cite{Leg80} phrased it. However, a straight forward mathematical formulation of this intuitive characterisation does not exist. Furthermore, there may be quantum states that do not exhibit a superposition of two semi-classical states, but are superpositions of a large number of those. These and further concerns led to various proposals for macroscopic quantum states \cite{Leg80,DSC,p-index,BM,q-index,KWDC,MAvD,LJ}.

  The goal of this work is twofold. First, we motivate and introduce another aspect of macroscopic quantum effects in discrete systems. After basic considerations in section \ref{sec:sett-basis-cons}, we propose to use the so-called quantum Fisher information as a measure for ``macroscopicity''. The quantum Fisher information originally appeared in the context of phase estimation \cite{ClassicalFisher}, but also gives insight into the geometry of the space of density operators \cite{DistinguishMetric} and their entanglement properties \cite{FisherEntanglement,HyllusToth,HyllusToth2}. In section \ref{sec:fish-inform-as}, we argue that the quantum Fisher information is an appropriate  measure to distinguish macroscopic from microscopic quantum effects in discrete quantum systems. Furthermore, it is a key property to judge on the usability of quantum states to overcome classical limits in parameter estimation protocols. In addition, we present a further characterisation for superpositions of the form
  \begin{equation}
    \label{eq:1}
    \left| \psi \right\rangle  = \frac{1}{\sqrt{2}} \left( \left| \psi_0 \right\rangle + \left| \psi_1 \right\rangle  \right),
  \end{equation}
  where $\left| \psi_0 \right\rangle $ and $\left| \psi_1 \right\rangle $ are orthogonal quantum states. This proposal is called ``relative Fisher information'' and compares how macroscopic $\left| \psi \right\rangle $ is in relation to the macroscopicity of $\left| \psi_0 \right\rangle $ and $\left| \psi_1 \right\rangle $. It is necessary for the latter to be semi-classical quantum states in order to form a macroscopic superposition due to the relative Fisher information. The basic motivation for this proposal is to identify the class of macroscopic superpositions in the spirit of Schr\"odinger's cat. In contrast to other proposals, it is unambiguously defined for all superpositions $\left| \psi \right\rangle $.

  Second, we compare the quantum Fisher information and the relative Fisher information as macro-measures with other measures for qubit systems that were recently proposed \cite{p-index,BM,q-index,KWDC,MAvD}. Therefore we give a short review on these measures in section \ref{sec:hier-among-macro}. There exist two classes of measures. The first class considers arbitrary quantum states, while the second class focuses on superposition states as in equation (\ref{eq:1}). In section \ref{sec:relat-among-meas}, we identify a hierarchy among those proposals. We show that the set of macroscopic superpositions (\ref{eq:1}) --classified by \cite{BM,KWDC,MAvD} and the relative Fisher information-- is a strict subset of general macroscopic quantum states according to \cite{p-index} and the quantum Fisher information.

  We summarise and conclude in section \ref{sec:conclusion}.

  \section{Setting and basic considerations}
  \label{sec:sett-basis-cons}

We exclusively focus on $N$ discrete two-level systems (qubits), defined on the Hilbert space $\mathbbm{C}^{2\otimes N}$. We restrict our discussion on qubit systems, because, on the one hand, they are valid representations of a large class of physical systems. On the other hand, the mathematical treatment of problems for qubit systems is often much simpler than for other frameworks, which can lead to insight that is useful for the entire quantum theory. However, it is clear that qubit systems are \textit{a priori} an abstract concept that does not include spatial distances between particles nor masses or energy scales. Therefore, the only parameter that gives rise to a qualification of macroscopic quantum states is the particle number $N$. In general, we assume $N$ to be a ``large number'', that is, $N \gg 1$. The boundary at which we call $N$ large is not strict. For theoretical considerations as in this paper, it is often more convenient to study the scaling of a certain property like the macroscopicity of quantum states. Therefore, we focus on quantum states that are scalable (i.e., there exists a recipe how to define these states for any $N$), and ask whether these states are macroscopic in the limit of large, but finite $N$.

As mentioned by other authors \cite{BM,KWDC,Leggett02}, the topic of macroscopic quantum mechanics is ultimately subjective. This is not only because different authors consider different properties of a quantum system as essential for the characterisation of a macro-system. (However, we will see in section \ref{sec:relat-among-meas} that different proposals lead to similar measures.) The notion of locality plays an essential role in all works that we are going to discuss in the course of this paper. This means that quantum states are macroscopic with respect to a given concept of locality. The motivation is physical rather than mathematical. It comes from the observation that all interactions we encounter in nature are of finite range. For macroscopic qubit systems, we therefore assume that all Hamiltonians and all measurements are sums of local terms. Since in this paper, we are concerned with the scaling of certain properties, we demand that the range of the addends is independent of the system size (i.e., $O(1)$ \footnote{The definitions we use here are $f(N) = O(N^x) :\Leftrightarrow \lim_{N\rightarrow \infty}f(N)/N^x > 0$ and  $f(N) = o(N^x) :\Leftrightarrow \lim_{N\rightarrow \infty}f(N)/N^x = 0$.}). For simplicity, we only consider local operators that act nontrivially on distinct groups of qubits.

  % ---------------------------------------------------
  % ---------------------------------------------------
  \section{Quantum Fisher information as a measure for macroscopicity}
  \label{sec:fish-inform-as}

  In this section, we define the quantum Fisher information as a measure for macroscopic quantum states. First, we begin with working definitions to motivate our proposal. These starting points are more or less vague characterisations of macroscopic quantum states, since they are built on vaguely defined terms. From these, we try to infer a more stringent definition, which we formulate in section \ref{sec:math-defint-macro}.

  \subsection{Starting points for a macro-measure}
  \label{sec:basic-idea-macro}

  As already mentioned, there are two important aspects of macroscopic quantum mechanics.  Originally, physicists have been interested in the foundations of quantum mechanics. Does a macroscopic object like a cat obey the Schr\"odinger equation \textit{in principle}? If not, where is the border between classical and quantum mechanical theory and what kind of theory could unify both realms? Those questions led, for example, to collapse theories like \cite{GRW}, which state that the Schr\"odinger equation has to be altered in order to describe the physics of heavy objects correctly. A different view-point is given by the decoherence theory. It assumes the general validity of quantum mechanics on macroscopic scales. Generically, a quantum system interacts with its environment locally, which destroys non-local quantum correlation within the system. Larger objects interact with the environment more intensively and therefore loose these correlations more rapidly. In decoherence theory, this is the reason why we never experience macroscopic quantum effects in our daily life. There is a way to provide experimental evidence in favour of the decoherence theory. Suppose we study a collapse model that forbids quantum mechanics on a certain mass scale. If we observe a behaviour of a macroscopic system at this scale that cannot be explained by classical theories nor by accumulative microscopic quantum effects (like the Debye-model for the specific heat), this collapse model is falsified. In the spirit of Leggett \cite{Leg80,Leggett02}, we take this as a starting point and note
  \begin{WoDef}\label{def1}
    A quantum system is called macroscopic if it is capable to show a behaviour that is neither a classical nor an accumulated microscopic quantum effect.
  \end{WoDef}

  In addition to fundamental considerations, applications of quantum mechanics are at the focus of interest. As a first starting point for applications analogous to working definition \ref{def1}, one could demand that a macroscopic quantum state is able to perform a certain task faster or better than any classical device. However, this leads to ambiguities for two reasons. First, there is a considerable range of different applications that are incommensurable. By focusing on a specific task to characterise macroscopic quantum states, we certainly bias the set of macro-states. If we look at resource states for universal quantum computation in a measurement based setting \cite{MBQC,MBQC2} we encounter different states than quantum states that are useful for quantum metrology. Second, the phrase ``faster or better'' is vaguely defined. In quantum computation, for example, it is not at all clear how much better certain quantum algorithms are.

  In this paper, we focus on quantum metrology for two reasons. First, the system size $N$ is a crucial parameter for the performance of a parameter estimation protocol. In addition, the improvement with the help of quantum mechanics compared to classical strategies is mathematically established. In the next section, we shortly review the theory of quantum metrology. For now, we note 

  \begin{WoDef}
    \label{def2}
    A macroscopic quantum state is defined as a resource state that is capable to increase the sensitivity of a parameter estimation protocol qualitatively.
  \end{WoDef}
  
  Before we go on with the implications of working definitions \ref{def1} and \ref{def2}, we introduce the concept of an ``effective size'' $N_{\mathrm{eff}}$ of the system. This term has been used in the literature, e.g.~in \cite{DSC,BM,KWDC,MAvD}, to assign a number $N_{\mathrm{eff}} \leq N$ to a system to judge on the ``macroscopicity'' of the system. We try to make this idea more precise with
  
  \begin{WoDef}
    \label{def3}
    A given quantum state $\rho$ of an $N$ qubit system may be capable to show a non-classical phenomenon. The effective size of $\rho$, denoted by $N_{\mathrm{eff}}(\rho)$, is the minimal system size for which one has to assume validity of quantum mechanics in order to explain this phenomenon.
  \end{WoDef}
  
The effective size gives us therefore a quantification of working definitions \ref{def1} and \ref{def2}. Given a certain quantum state $\rho$ of $N$ qubits, its effective size $N_{\mathrm{eff}}(\rho)$ is taken as the basis for a qualification of $\rho$ as a ``macroscopic quantum state''. A nontrivial step is now to define for which $N_{\mathrm{eff}}(\rho)$ one should call $\rho$ macroscopic. While for a specific experiment the absolute number of $N_{\mathrm{eff}}(\rho)$ may be crucial, for theoretical investigations the scaling of $N_{\mathrm{eff}}(\rho)$ with $N$ is apparently more interesting. The maximal scaling of the effective size is linear in $N$ due to working definition \ref{def3}. It is therefore clear that $\rho$ is called macroscopic if $N_{\mathrm{eff}}(\rho) = O(N)$. For a scaling that is sub-linear [i.e., $N_{\mathrm{eff}}(\rho) = o(N)$], the decision whether $\rho$ is called macroscopic or not is subjective and a straightforward answer does not exist. In this paper, we follow a conservative definition and use for all considered macroscopicity measures

  \begin{Def}
    A quantum state $\rho$ is called macroscopic (due to a given measure) if its effective size $N_{\mathrm{eff}}(\rho)$ (due to this measure) is linear in the system size, that is, $N_{\mathrm{eff}}(\rho)=O(N)$.\label{def:macro}
  \end{Def}

  As a counterexample, consider a product state of maximally entangled two-qubit states $\left| \psi^{-} \right\rangle ^{\otimes N/2}$, which is used to model Cooper pairs in superconductivity theory. The effective size is $N_{\mathrm{eff}} = 2$ and the phenomenon of superconductivity is hence a microscopic quantum effect according to working definition \ref{def3} and definition \ref{def:macro}.

  In the following paragraphs, we try to infer from the working definitions  mathematical definitions for macroscopic quantum states and effective size.

  \subsection{Mathematical definition for macro-measure}
  \label{sec:math-defint-macro}

  We are going to propose the use of the quantum Fisher information as a measure for macroscopic quantum states. Since in this paper, we exclusively focus on the quantum Fisher information and are not concerned with the (classical) Fisher information, the term Fisher information always refers to quantum Fisher information.
  
  Consider an initial quantum state $\rho$ that is subject to a unitary time evolution generated by the time-independent Hamiltonian $H$, $\rho(t) = e^{-iHt/\hbar} \rho e^{iHt/\hbar}$. The set $\left\{ \rho(t): t\in \mathbbm{R} \right\}$ is a parametrised curve through the space of density operators. If we use the Bures metric \cite{Bures} to measure distances in this space, we define the quantum Fisher information implicitly as \cite{DistinguishMetric,Uhlmann92}
\begin{equation}
\label{eq:5}
(ds)_{\mathrm{Bures}} = \frac{1}{2\hbar}\sqrt{\mathcal{F}_t(\rho,H)} dt.
\end{equation}
In general, $\mathcal{F}_t(\rho,H)$ depends on $t$. In our case, $H$ is time-independent and so is $\mathcal{F}_t(\rho,H)$ \cite{DistinguishMetric}, for which reason we omit the index $t$ in the following. Then, the explicit formula of $\mathcal{F}(\rho,H)$ reads \cite{DistinguishMetric,HelstromHolevo,HelstromHolevo2}
\begin{equation}
\label{eq:6}
\mathcal{F}(\rho,H) = 2\sum_{i,j=1}^{2^N}\frac{\left( \pi_i-\pi_j \right)^2}{\pi_i+\pi_j}\left| \left\langle i \right| H \left| j \right\rangle  \right|^2,
\end{equation}
where we used the spectral decomposition of $\rho$
\begin{equation}
\label{eq:7}
\rho = \sum_{i=1}^{2^N} \pi_i \left| i \rangle\!\langle i\right| .
\end{equation}
For pure states $\rho=\left| \psi \rangle\!\langle \psi\right| $, the Fisher information reduces to the variance $ \mathcal{V}_{\psi}(H) = \langle H^2 \rangle_{\psi}-\langle H \rangle_{\psi}^2$ of $H$
\begin{equation}
\label{eq:9}
\mathcal{F}(\psi,H) = 4 \mathcal{V}_{\psi}(H).
\end{equation}

As discussed in the introduction, in physical systems of macroscopic size, we generically encounter Hamiltonians that consists of local terms. These describe interactions with external fields like a magnetic field or few-body interactions. It has been noticed that entangled states can exhibit a much larger Fisher information than separable states. Let us focus on $H$ as a sum of one-particle terms $H = \sum_{i=1}^N H^{(i)}$, where $H^{(i)}$ acts non-trivially only on qubit $i$. We fix the operator norm of the single terms to a characteristic energy $\lVert H^{(i)} \rVert = e_0 $. It is by now well known that separable states $\rho_{\mathrm{sep}}$ obey for every local $H$ a Fisher information that scales at most linearly with the system size \cite{FisherEntanglement}, $\mathcal{F}\left(\rho_{\mathrm{sep}},H\right) \leq 4 e_0^2 N$. On the other side, there exist quantum states like
\begin{Exp}[GHZ state]
  \label{ex:ghz}
The multipartite GHZ state
\begin{equation}
\label{eq:8}
\left| \mathrm{GHZ} \right\rangle = \frac{1}{\sqrt{2}}\left( \left| 0 \right\rangle ^{\otimes N} + \left| 1 \right\rangle ^{\otimes N} \right)
\end{equation} gives $\mathcal{F}\left( \mathrm{GHZ}  ,e_0\sum_i\sigma_z^{(i)}\right) = 4 e_0^2 N^2$. (The state vectors $\left| 0 \right\rangle $ and $\left| 1 \right\rangle $ denote the eigenstates of the $\sigma_z$ Pauli operator.)   
\end{Exp}

We therefore observe that for local Hamiltonians certain quantum states exhibit a much larger Fisher information than any separable state. We use this insight for 
\begin{Def}[\textbf{a}]
  \label{defApplication}
  Be $\mathcal{A}$ the set of linear operators $A$ that are sums of one-particle terms $A = \sum_{i=1}^NA^{(i)}$, where each term has a unit operator norm, $\lVert A^{(i)} \rVert=1$. We call a quantum state $\rho$ macroscopic if there exists an $A \in \mathcal{A}$ such that $\mathcal{F}(\rho,A) = O(N^2)$. The effective size is defined as
\begin{equation}
N_{\mathrm{eff}}^{\mathrm{F}} (\rho)= \max_{A \in \mathcal{A}}\mathcal{F}(\rho,A)/(4 N).\label{eq:13}
\end{equation}
We call $\rho$ a macroscopic quantum state, if $N_{\mathrm{eff}}^{\mathrm{F}} (\rho)=O(N)$.
The superscript $\mathrm{F}$ indicates the Fisher information and ease the comparison with other proposals in sections \ref{sec:relat-fish-inform-1} and \ref{sec:hier-among-macro}.
\end{Def}

For all pure states $\left| \psi \right\rangle $, the possible range of the effective size is $1\leq N_{\mathrm{eff}}^{\mathrm{F}}(\psi)\leq N$. The upper bound comes from the operator norm of $A \in \mathcal{A}$, which equals $N$. The lower bound is a consequence of the fact that one can always find a  $A \in \mathcal{A}$ for which $\mathcal{V}_{\psi}(A) \geq N$. For general quantum states, one has $0\leq N_{\mathrm{eff}}^{\mathrm{F}}(\rho)\leq N$.
We now discuss why definition \ref{defApplication} (a) is a consequence of working definitions \ref{def1}, \ref{def2} and \ref{def3}. 

\textit{Re} working definition \ref{def1}: Suppose one measures the time-evolving state $\rho(t)$ by projecting it onto the range of the initial state $\rho(0)$. We denote this projective measurement by $\Pi$. The expectation value $\langle \Pi \rangle_{\rho(t)}$ is the probability that, at time $t$, we find the quantum state in the subspace that is the range of $\rho(0)$, for which reason this quantity is often called survival probability. Trivially, one has $\langle \Pi \rangle_{\rho(0)}=1$. In \cite{Fleming}, the following statement has been proven.
\begin{Pro}\label{Fleming}
  For the interval $0 \leq \sqrt{\mathcal{F}(\rho,H)}
  \left| t \right| /\hbar\leq \pi$ we find
  \begin{equation}
    \label{eq:11}
    \langle \Pi \rangle_{\rho(t)}\geq \cos^2 \frac{\sqrt{\mathcal{F}(\rho,H)}t}{2\hbar}.
  \end{equation}
\end{Pro}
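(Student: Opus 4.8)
The plan is to split the estimate into three independent steps: (a) bound the survival probability from below by the squared Uhlmann fidelity, $\langle\Pi\rangle_{\rho(t)}\ge F(\rho,\rho(t))^2$; (b) bound the Bures geodesic distance $D_{\mathrm B}(\rho,\rho(t))$ by the Bures length of the orbit $\{\rho(t'):0\le t'\le t\}$, which by (\ref{eq:5}) equals $\tfrac{1}{2\hbar}\sqrt{\mathcal{F}(\rho,H)}\,|t|$; and (c) use the Uhlmann identity $D_{\mathrm B}(\rho,\sigma)=\arccos F(\rho,\sigma)$ to convert the distance bound into a lower bound on $F(\rho,\rho(t))$. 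Chaining (c) into (b) into (a) and squaring yields (\ref{eq:11}) on exactly the stated interval.

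For step (a): fix any purification $|\psi\rangle\in\mathcal{H}\otimes\mathcal{H}_{\mathrm{anc}}$ of $\rho$; since $\Pi$ projects onto the support of $\rho$ we have $(\Pi\otimes\mathbbm{1})|\psi\rangle=|\psi\rangle$. For an arbitrary purification $|\phi\rangle$ of $\rho(t)$, Cauchy--Schwarz gives $|\langle\psi|\phi\rangle|^2=|\langle\psi|(\Pi\otimes\mathbbm{1})|\phi\rangle|^2\le\langle\phi|(\Pi\otimes\mathbbm{1})|\phi\rangle=\mathrm{Tr}\bigl(\Pi\,\rho(t)\bigr)=\langle\Pi\rangle_{\rho(t)}$. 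Maximising the left side over $|\phi\rangle$ and invoking Uhlmann's theorem gives $F(\rho,\rho(t))^2\le\langle\Pi\rangle_{\rho(t)}$.

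For steps (b)--(c): the curve $t'\mapsto\rho(t')$ has Bures speed $\tfrac{1}{2\hbar}\sqrt{\mathcal{F}(\rho(t'),H)}$, which is constant in $t'$ because $\rho(t')$ is related to $\rho$ by a unitary generated by $H$ itself, so $\mathcal{F}(\rho(t'),H)=\mathcal{F}(\rho,H)$ (as already noted); hence its arc length from $\rho$ to $\rho(t)$ is $\tfrac{1}{2\hbar}\sqrt{\mathcal{F}(\rho,H)}\,|t|$, and this dominates the geodesic distance $D_{\mathrm B}(\rho,\rho(t))$ (the orbit itself need not be a geodesic, which is irrelevant here). Therefore $\arccos F(\rho,\rho(t))=D_{\mathrm B}(\rho,\rho(t))\le\tfrac{1}{2\hbar}\sqrt{\mathcal{F}(\rho,H)}\,|t|$; on the stated interval the right side lies in $[0,\pi/2]$, and since $\arccos F\in[0,\pi/2]$ always, monotonicity of $\cos$ on $[0,\pi/2]$ gives $F(\rho,\rho(t))\ge\cos\bigl(\tfrac{1}{2\hbar}\sqrt{\mathcal{F}(\rho,H)}\,|t|\bigr)$. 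Squaring, combining with step (a), and using that $\cos^2$ is even in $t$ gives (\ref{eq:11}).

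The only input that is not a short verification is the Uhlmann identity $D_{\mathrm B}=\arccos F$ of step (c). I would either cite Uhlmann's analysis of the Bures metric, or keep the proof self-contained by replacing (b)--(c) with a purification-level Mandelstam--Tamm argument: pick a variance-minimising purification $|\psi\rangle$ of $\rho$, i.e.\ one with $4\,\mathcal{V}_{\psi}(H\otimes\mathbbm{1})=\mathcal{F}(\rho,H)$, evolve it under $H\otimes\mathbbm{1}$, use the elementary pure-state bound $|\langle\psi|e^{-i(H\otimes\mathbbm{1})t/\hbar}|\psi\rangle|^2\ge\cos^2\bigl(\sqrt{\mathcal{V}_{\psi}(H\otimes\mathbbm{1})}\,|t|/\hbar\bigr)$, valid for $\sqrt{\mathcal{V}_{\psi}(H\otimes\mathbbm{1})}\,|t|/\hbar\le\pi/2$, and feed its left side into step (a). I expect the technical heart to be exactly that pure-state Mandelstam--Tamm estimate (a one-parameter differential-inequality argument for $p(t)=|\langle\psi|\psi(t)\rangle|^2$) together with the existence of a variance-minimising purification realising $\mathcal{F}(\rho,H)/4$; everything else is just the geometric meaning of $\mathcal{F}$ fixed by (\ref{eq:5}) and the elementary fidelity bound of step (a).
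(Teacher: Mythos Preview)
The paper does not supply its own proof of Proposition~\ref{Fleming}; it merely cites Fleming's paper. Your main argument via Bures geometry is a correct and standard route: step~(a) is a clean Cauchy--Schwarz/Uhlmann argument, and steps~(b)--(c) correctly combine the arc-length bound from (\ref{eq:5}) with Uhlmann's identification of the Bures geodesic distance as $\arccos F$. The restriction to $\sqrt{\mathcal{F}}\,|t|/\hbar\le\pi$ is exactly what is needed to keep the bound on $\arccos F$ inside $[0,\pi/2]$, where $\cos$ is monotone.

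Your proposed self-contained alternative, however, has a genuine gap. For \emph{any} purification $|\psi\rangle$ of $\rho$ one has
\[
\mathcal{V}_{\psi}(H\otimes\mathbbm{1})=\langle\psi|H^{2}\otimes\mathbbm{1}|\psi\rangle-\langle\psi|H\otimes\mathbbm{1}|\psi\rangle^{2}=\mathrm{Tr}(\rho H^{2})-\bigl(\mathrm{Tr}\rho H\bigr)^{2}=\mathcal{V}_{\rho}(H),
\]
which is purification-independent. Hence there is no ``variance-minimising purification'' with $4\,\mathcal{V}_{\psi}(H\otimes\mathbbm{1})=\mathcal{F}(\rho,H)$ unless $\rho$ is pure; in general $4\,\mathcal{V}_{\rho}(H)\ge\mathcal{F}(\rho,H)$ strictly. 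Running Mandelstam--Tamm at the purification level with generator $H\otimes\mathbbm{1}$ therefore yields only the weaker bound $\langle\Pi\rangle_{\rho(t)}\ge\cos^{2}\bigl(\sqrt{\mathcal{V}_{\rho}(H)}\,|t|/\hbar\bigr)$. The fix is to allow an ancilla Hamiltonian: there always exists $H_{E}$ such that $4\,\mathcal{V}_{\psi}(H\otimes\mathbbm{1}+\mathbbm{1}\otimes H_{E})=\mathcal{F}(\rho,H)$, and evolving under this extended generator still purifies $\rho(t)$, so step~(a) applies unchanged. With that correction the alternative goes through; as written it does not.
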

This bound gives an upper limit on the ``speed'' with which the system evolves. Observe that inequality (\ref{eq:11}) allows certain quantum states, like the GHZ state, to evolve much faster than any separable state. This means that the observation of a fast time evolution indicates the validity of quantum mechanics. This effect is a true macroscopic quantum phenomenon, because we need long-range quantum correlations within the state $\rho(t)$ to achieve a Fisher information that is proportional to $N^2$ (compare also with \cite{Morimae09}). This aspect is discussed later in more detail.

We stress that the observation of rapid oscillations apparently verifies macroscopic quantum effects only if we have full knowledge of the unitary dynamics. If the interaction strength of the Hamiltonian is not known, the plain data does not give any conclusion on the ``macroscopicity'' of initial quantum state. However, if in the experiment we are able to vary $N$, we can reveal the dependency of the measured frequency of the oscillation. A linear dependence in $N$ is an evidence for a macroscopic quantum effect.

\textit{Re} working definition \ref{def2}: The Fisher information appears most prominently in the context of metrology, which we are going to review now. In this theory, an experimenter aims to estimate an unknown parameter $\omega$ that is encoded into a quantum state she or he has access to. The state is described by a density operator $\rho(\omega)$. We measure the system with a POVM represented by the measurement operators $\left\{ E_i \right\}_i$. The probabilities $p_i(\omega)= \mathrm{Tr}\left[ \rho(\omega)E_i \right]$ of the outcomes give rise to an estimate of $\omega$. The general difficulty of calculating the error of the estimate $\delta\omega$ is eased by the Cram\'er-Rao bound \cite{ClassicalFisher,CramerRao}
\begin{equation}
\label{eq:2}
\delta\omega \geq \frac{1}{\sqrt{n F(\omega)}}
\end{equation}
where $n$ is the total number of repetitions of the experiment and $F(\omega)$ is the classical Fisher information \cite{ClassicalFisher}
\begin{equation}
\label{eq:3}
F(\omega) = \sum_i p_i(\omega)\left[ \frac{d}{d\omega}\log p_i(\omega) \right]^2.
\end{equation}
The higher the classical Fisher information, the better is the lower bound on $\delta\omega$, which can be attained for unbiased estimates in the limit of large $n$. One can improve the Cram\'er-Rao bound by varying the measurement one uses. Again, we then end up with the quantum Fisher information, which within this approach can be defined as \cite{DistinguishMetric}
\begin{equation}
\label{eq:4}
\mathcal{F}(\omega) = \max_{\left\{ E_i \right\}_i}F(\omega).
\end{equation}

If the curve $\rho(\omega)$ is generated by an $\omega$-independent self-adjoint operator $A$, we end up with the same formula for the Fisher information $\mathcal{F}(\rho,A)$ from equation (\ref{eq:6}). The previous discussion on the implications for the measure of macroscopic quantum states is similar for quantum metrology. While separable states lead to a minimal error $\delta\omega \propto 1/\sqrt{N}$, ``macroscopic'' quantum states as a resource give rise to a quadratic improvement $\delta\omega \propto 1/N$. From working definition \ref{def2}, we again find that a quantum state is macroscopic if $\mathcal{F}(\rho,H) = O(N^2)$, see also definition \ref{defApplication} (a).

\textit{Re} working definition \ref{def3}: The assignment $N_{\mathrm{eff}}^{\mathrm{F}}(\rho) = \max_{A \in \mathcal{A}} \mathcal{F}(\rho,A)/(4N)$ was defined as the effective size of $\rho$. We now argue why this quantity allows us to distinguish between microscopic and macroscopic quantum effects in spin systems. We therefore divide the Hilbert space into groups of at most $k$ particles $\mathcal{H} =  \mathbbm{C}^{2\otimes k_1} \otimes \ldots \otimes \mathbbm{C}^{2 \otimes k_n}$, with $\sum_{i = 1}^n k_i = N$ and $k_i \leq k$. A pure state $\left| \psi_k \right\rangle = \left| \phi_{1} \right\rangle \otimes \ldots \otimes \left| \phi_n \right\rangle$ with $\left| \phi_{j} \right\rangle \in \mathbbm{C}^{2\otimes k_j}$ that is a tensor product with respect to this splitting is called $k$-producible. For local $H$, the variance is additive
\begin{equation}
\label{eq:47}
\mathcal{V}_{\psi_k}(H) = \mathcal{V}_{\phi_1}\left(\sum_{i=1}^{k_1}H^{(i)}\right) + \ldots +\mathcal{V}_{\phi_n}\left(\sum_{i=N-k_n}^{N}H^{(i)}\right).
\end{equation}
For every group $j$ of equation (\ref{eq:47}), the variance can be estimated by the maximal possible value $k_j^2 \leq k^2$. Then, one can easily see that the Fisher information of $\left| \psi_k \right\rangle $ is at most 
\begin{equation}
\mathcal{F}\left(\psi_k,\frac{H}{2 e_0}\right) \leq k_1^2 + \ldots + k_n^2 \leq k \left( k_1 + \dots + k_n \right) = kN.\label{eq:15}
\end{equation}
A Fisher information $\mathcal{F}(\psi_k,H) \propto kN$ for the given splitting $\left\{ k_i \right\}$ is possible only if quantum mechanics is valid (compared to classical physics) on the scale of $k$ particles. 

Very recently, a tighter and generalised estimate was shown for $k$-producible mixed states $\rho_k$, which are defined as incoherent mixtures of $k$-producible pure states with respect to possibly distinct splittings $\left\{ k_i \right\}$. The authors of \cite{HyllusToth,HyllusToth2} proved that
\begin{equation}
\label{eq:10}
\mathcal{F}\left(  \rho_k,\frac{H}{2e_0}\right) \leq  s k^2 + (N-sk)^2
\end{equation}
with $s = \lfloor \frac{N}{k}\rfloor$, which coincides with the result for pure states if $k$ is a factor of $N$. 

Therefore, if a quantum state $\rho$ has an effective size $N_{\mathrm{eff}}^{\mathrm{F}} \geq k$, we know that this state is potentially able to demonstrate the existence of quantum correlations of the range of at least $k$ qubits.

\subsection{Discussion and basic examples}
\label{sec:discussion}

In this section, we discuss some implications and aspects of definition \ref{defApplication} (a) and illustrate them with examples.

The first paradigmatic macroscopic quantum state we discuss is the GHZ state (\ref{eq:8}). As already mentioned, the maximal variance of a dimensionless local operator equals $N^2$, which means that the effective size (\ref{eq:13}) [with (\ref{eq:9})] is $N_{\mathrm{eff}}^{\mathrm{F}}( \mathrm{GHZ}  ) = N$. This is in accordance to all other proposals \cite{Leg80,DSC,p-index,BM,KWDC,MAvD,LJ}.

Prominent quantum states that do not exhibit long-range quantum correlations are the W state \cite{WState} and the cluster states \cite{BR01,Hein05}. Therefore, these states are not macroscopic with respect to the Fisher information. Observe, however, that the Fisher information is not an entanglement measure as it can increase under local operations and classical communication (LOCC). For example, from the two-dimensional cluster state with $N$ qubits, one can obtain a GHZ state with $O(N)$ qubits by means of LOCC \cite{BR01}. Because the Fisher information detects a certain kind of correlations, the Fisher information can increase under LOCC. We conclude that if we allow LOCC to manipulate quantum states before we calculate the Fisher information, we enlarge the set of macroscopic quantum states, for example, by the two-dimensional cluster state, which is a valuable resource, as it can be used for universal measurement based quantum computation \cite{MBQC,MBQC2}. In this paper, we calculate the Fisher information without previous LOCC manipulation, which is in line with other discussed proposals \cite{Leg80,p-index,BM,q-index,KWDC,MAvD,LJ}. Note, however, that in \cite{DSC}, quantum states were called macroscopic, if one can distill the GHZ state with $O(N)$ particles from the resource state. This clearly uses LOCC and in this sense, also the cluster states are macroscopic.

Several works on macroscopic quantum states \cite{BM,KWDC,MAvD,LJ} have discussed a specific quantum state, which was introduced in \cite{DSC}. It is an example of a superposition, where the two constituents $\left| \psi_0 \right\rangle $ and $\left| \psi_1 \right\rangle $ are not orthogonal but exhibit an overlap that vanishes in the limit of large $N$.

\begin{Exp}[Generalized GHZ state]\label{ex:generalized-ghz}
  This state, which we call ``generalised GHZ'' state in the following, is
  defined as
  \begin{equation}
    \label{eq:12}
    \left| \phi_{\epsilon} \right\rangle = \frac{1}{\sqrt{2 \left( 1+  \cos^N \epsilon \right)}}\left( \left| 0 \right\rangle ^{\otimes N} + \left| \epsilon \right\rangle ^{\otimes N} \right)
  \end{equation}
  with $\left| \epsilon \right\rangle = \cos \epsilon \left| 0
  \right\rangle + \sin \epsilon \left| 1 \right\rangle $, $\epsilon \in [0,\pi/2]$. The authors
  of \cite{DSC,BM,KWDC,MAvD,LJ} conclude unanimously that the
  effective size $N_{\mathrm{eff}}$ (in the sense of the respective
  contributions) of the quantum state (\ref{eq:12}) for $\epsilon \ll
  1$ equals $N_{\mathrm{eff}} \approx \epsilon^2 N$, that is the
  effective size of the generalised GHZ state compared to the standard
  GHZ state is reduced by the factor $\epsilon^2$ (see also table \ref{tab:summaryExamples}). To quantify the
  effective size via the Fisher information, we have to maximise over
  all local observables $A$ (see also reference \cite{Hyllus}). Due to the symmetry of the quantum state,
  we optimise the variance $\mathcal{V}_{\phi_{\epsilon}}(A)$ with the
  ansatz
  \begin{equation}
    \label{eq:14}
    A = \sum_{i=1}^N \cos \alpha \sigma_x^{(i)} + \sin \alpha \sigma_z^{(i)}.
  \end{equation}
  A simple maximisation over the angle $\alpha$ leads to $\alpha
  = \epsilon/2$ and hence $\mathcal{V}_{\phi_{\epsilon}}(A_{\mathrm{max}}) =
N^2 \frac{\sin^2\epsilon}{1-\cos^N\epsilon} + N(1-\frac{\sin^2\epsilon}{1-\cos^N\epsilon})$. With $\cos^N \epsilon  \approx 0$ for large $N$ one has $N_{\mathrm{eff}}^{\mathrm{F}}(\phi_{\epsilon}) \approx N\sin^2
  \epsilon + \cos^2 \epsilon \approx \epsilon^2 N$. The last approximation
  is valid for small $\epsilon$ and it is therefore in accordance with
  \cite{DSC,BM,KWDC,MAvD,LJ}.
\end{Exp}

 The $\left| \phi_{\epsilon} \right\rangle $ state is a possible generalisation of the standard GHZ state. An other possible extension is to superpose slightly entangled states instead of product states. A state of this kind was discussed in \cite{ClusterGHZ,cGHZpra}.
Consider the one-dimensional cluster states with periodic boundary
  conditions
  \begin{equation}
    \label{eq:16}  
      \left| \mathrm{Cl^{+}} \right\rangle = \prod_{i=1}^N
      C^{(i,i+1)} \left|
        + \right\rangle ^{\otimes N} \quad \text{and} \quad
      \left| \mathrm{Cl^{-}} \right\rangle = \prod_{i=1}^N
      C^{(i,i+1)} \left| - \right\rangle ^{\otimes N},
  \end{equation}
  where $C^{(i,i+1)} = \left| 0 \rangle\!\langle 0\right|
  ^{(i)}\mathbbm{1}^{(i+1)} + \left| 1 \rangle\!\langle 1\right|
  ^{(i)} \sigma_z^{(i+1)}$ is a phase gate, $\left|\pm \right\rangle $
  are the $\pm 1$ eigenstates of the $\sigma_x$ Pauli operator and $N+1
  \equiv 1$.

  \begin{Exp}[Cluster-GHZ state]\label{ex:cluster-ghz}
  The superposition
  \begin{equation}
    \label{eq:17}
    \left| \mathrm{CG} \right\rangle = \frac{1}{\sqrt{2}}\left( \left| \mathrm{Cl^{+}} \right\rangle + \left| \mathrm{Cl^{-}} \right\rangle  \right)
  \end{equation}
  was called cluster-GHZ state in \cite{cGHZpra}. One can see easily
  that the cluster states (\ref{eq:16}) do not have long
  range quantum correlations and neither does the cluster-GHZ state. According to definition \ref{defApplication} (a), one has $N_{\mathrm{eff}}^{\mathrm{F}} = 1$. On the other side, we have long-range quantum correlations among local \textit{groups} of particles. If we consider two groups of three particles each, we see with
  $S_i=\sigma_z^{(i-1)}\sigma_x^{(i)}\sigma_z^{(i-1)}$ that for all
  $(i,j)$
  \begin{equation}
    \label{eq:18}
    \langle S_iS_j \rangle_{ \mathrm{CG}  }-\langle S_i \rangle_{ \mathrm{CG}  }\langle S_j \rangle_{ \mathrm{CG}  }=1,
  \end{equation}
  since $S_i \left| \mathrm{Cl^{\pm}} \right\rangle = \pm\left|
    \mathrm{Cl^{\pm}} \right\rangle $. Hence, one has $\mathcal{V}_{\mathrm{CG}}(\sum_{i=2,5,8,\ldots}S_i) = N^2/3$ and we would intuitively assign $N_{\mathrm{eff}}^{\mathrm{F}}(\mathrm{CG}) = N/3$.
\end{Exp}

Also other states like \textit{logical} GHZ states, where the physical particles in equation (\ref{eq:8}) are replaced by blocks of qubits \cite{cGHZpra}, may not have long-range quantum correlations between single particles, but between groups of size $O(1)$. Since it seems natural to consider these states also as macroscopic, we have to extend definition \ref{defApplication} (a) by
\addtocounter{Def}{-1} 
\begin{Def}[\textbf{b}]\label{DefExtension}
  Let the Hilbert space  $\mathbbm{C}^{2\otimes N}$ be divided into $n = O(N)$ distinct groups of qubits, each group is of size $O(1)$. Let $\mathcal{A}$ be the set of all local operators $A=\sum_{i=1}^{n} A^{(i)}$ such that every operator $A^{(i)}$ acts non-trivially on the group $i$ and $\lVert A^{(i)} \rVert = 1$. A quantum state $\rho$ is called macroscopic if there exists a grouping such that
  \begin{equation}
\label{eq:19}
N_{\mathrm{eff}}^{\mathrm{F}}(\rho)= \max_{A\in \mathcal{A}} \mathcal{F}(\rho,A)/(4n) = O(N).
\end{equation}
\end{Def}
In the following, the term ``macroscopic quantum state'' refers to the definitions \ref{defApplication} (a) and \ref{DefExtension} (b), together denoted by definition \ref{DefExtension}. Although it should be clear from the definition, this measure for macro-states is valid for pure and mixed states. As we see later in section \ref{sec:hier-among-macro}, for other measures this is not the case, especially if they are defined for superposition states like in equation (\ref{eq:1}).

For a fixed grouping with $n$ groups in total, the maximal possible effective size is $N_{\mathrm{eff}}^{\mathrm{F}} = n$. It is therefore clear that in general we try to maximise $n$.

\subsection{Relative Fisher information for macroscopic superpositions}
\label{sec:relat-fish-inform-1}

The Fisher information proved to be a good candidate for measuring the effective size of a macroscopic quantum states. It is well defined for general mixed states and does not require specific superpositions like in equation (\ref{eq:1}), as other measures \cite{BM,KWDC,MAvD} do. The famous Schr\"odinger cat gedankenexperiment, however, is an example where two classical states are superposed. The total state is \textit{the} archetypal macroscopic superposition. In order to have a characterisation of the ``catness'' of $\left| \psi \right\rangle = \frac{1}{\sqrt{2}}\left( \left| \psi_0 \right\rangle +\left| \psi_1 \right\rangle  \right)$ at hand, we introduce a measure for superpositions that is based on the Fisher information.

\begin{Def}
  \label{def:relative-Fisher}
  Be $\left| \psi \right\rangle = \frac{1}{\sqrt{2}}\left( \left| \psi_0 \right\rangle +\left| \psi_1 \right\rangle  \right) \in \mathbbm{C}^{2 \otimes N}$ a superposition of two quantum states $\left| \psi_0 \right\rangle $ and $\left| \psi_1 \right\rangle$. The quantity
\begin{equation}
\label{eq:46}
N_{\mathrm{eff}}^{\mathrm{rF}}(\psi) = \frac{N_{\mathrm{eff}}^{\mathrm{F}}(\psi)}{\frac{1}{2}N_{\mathrm{eff}}^{\mathrm{F}}(\psi_0)+\frac{1}{2}N_{\mathrm{eff}}^{\mathrm{F}}(\psi_1)}
\end{equation}
is called relative Fisher information. If $N_{\mathrm{eff}}^{\mathrm{rF}} = O(N)$, then $\left| \psi \right\rangle $ is called a macroscopic superposition.
\end{Def}
The measure is such that a superposition state is macroscopic, only if the Fisher information of the total state is high and at the same time the constituents $\left| \psi_0 \right\rangle $ and $\left| \psi_1 \right\rangle$ have a small Fisher information, which means that they considered to be semi-classical, that is, microscopic.

There are two important differences between $N_{\mathrm{eff}}^{\mathrm{F}}$ and $N_{\mathrm{eff}}^{\mathrm{rF}}$.

(1) There are states $\left| \psi \right\rangle $ that have a small $N_{\mathrm{eff}}^{\mathrm{rF}}(\psi)$ and are nevertheless able to show macroscopic quantum effects demanded in working definition \ref{def1}, e.g.~if $N_{\mathrm{eff}}^{\mathrm{F}}(\psi) = O(N)$ and $N_{\mathrm{eff}}^{\mathrm{F}}(\psi_0) = O(N)$ (see example \ref{Ex:Dicke} later).

(2) In contrast to the Fisher information, the extension of definition \ref{def:relative-Fisher} to mixed states is not trivially possible. However, if we consider the an initially pure state $\left| \psi \right\rangle = \frac{1}{\sqrt{2}}\left( \left| \psi_0 \right\rangle +\left| \psi_1 \right\rangle  \right)$ that is subject to a cp-map $\mathcal{E}$, the relative Fisher information for a the noisy state $\rho = \mathcal{E}(\psi)$ may be defined as
\begin{equation}
\label{eq:23}
N_{\mathrm{eff}}^{\mathrm{rF}}(\rho) = \frac{N_{\mathrm{eff}}^F[\mathcal{E}( \psi  )]}{\frac{1}{2}N_{\mathrm{eff}}^F[\mathcal{E}( \psi_0  )]+\frac{1}{2}N_{\mathrm{eff}}^F[\mathcal{E}( \psi_1 )]}.
\end{equation}
% ---------------------------------------------------
% ---------------------------------------------------

  \section{Comparison to existing proposals}
  \label{sec:hier-among-macro}

  In general, the genesis of proposals for macroscopic quantum states is the following. First, one tries to find a property of quantum states that seems to be characteristic for genuine macroscopic states. (For the Fisher information, this step is manifested in working definitions \ref{def1} to \ref{def3}.) Next, this property is defined in a mathematically rigorous way. Finally, the definition is applied to several examples to check for consistency. In this section, we review and discuss the first two points of several measures \cite{p-index,BM,KWDC,MAvD,q-index} that are suitable for multipartite qubit systems. We find that there exist two main classes of measures. The first class, which incorporates the Fisher information and \cite{p-index,q-index}, tries to quantify general quantum states of macroscopic systems. The second class --consisting of the relative Fisher information and the measures \cite{BM,KWDC,MAvD}-- focuses on superpositions of two orthogonal quantum states as in equation (\ref{eq:1}). To distinguish these two classes, we call macro-states from the first class ``macroscopic quantum states'' and those from the second class ``macroscopic superpositions''. The quantum states that are discussed as examples in the course of this section are summarised in table \ref{tab:summaryExamples}.

\label{sec:revi-curr-liter}
Before we start with the review of \cite{p-index,BM,q-index,KWDC,MAvD}, we remark that there are two other proposals \cite{Leg80,DSC} for macroscopic superpositions that were important for the discussion of this topic but are nevertheless not general enough to be applicable for general superpositions.

The first one was introduced by Leggett \cite{Leg80}. He demanded that two superposed states $\left| \psi_0 \right\rangle $ and $\left| \psi_1 \right\rangle $ should be macroscopically distinct. To translate this into mathematical terms, a superposition state $\left| \psi \right\rangle = \frac{1}{\sqrt{2}}\left( \left| \psi_0 \right\rangle +\left| \psi_1 \right\rangle  \right)$ is called macroscopic --according to Leggett-- if we need at least an $O(N)$-correlation operator in order to be able to distinguish between $\left| \psi \right\rangle $ and the incoherent mixture $\left| \psi_0 \rangle\!\langle \psi_0 \right| + \left| \psi_1 \rangle\!\langle \psi_1\right| $. For the GHZ state (see example \ref{ex:ghz}), we need the correlation operator $\sigma_x^{\otimes N}$ for this task. Unfortunately, the requirement is too insensitive, as it can be seen for the generalised GHZ state (example \ref{ex:generalized-ghz}). In this case, any $\epsilon>0$ leads to a macroscopic quantum state. The idea of ``macroscopic distinctness'' was nevertheless used in \cite{KWDC} and \cite{MAvD} and led to more elaborated measures.

As already mentioned, the generalised GHZ state was introduced in \cite{DSC} to introduce the notion of an effective size. There, it is shown that the trace norm\footnote{The definition of the trace norm for a linear operator $M$ reads $\lVert M \rVert_1 = \mathrm{Tr}\sqrt{M^{\dag}M}$.} of the coherence term $\left| 0 \rangle\!\langle \epsilon\right| ^{\otimes N}$ decays under local phase noise 
with a rate
\begin{equation}
\label{eq:42}
\lVert \left| 0 \rangle\!\langle \epsilon\right| ^{\otimes N} \rVert_1 \approx e^{-\gamma \epsilon^2 N t}
\end{equation}
in the case of small $\epsilon$ and $t$. It was argued that this indicates a reduced effective size compared to the GHZ state, whose coherences decay with $e^{-\gamma N t}$. However, this insight can not be generalised, that is, one cannot conclude from the damping rate of the coherences on the effective size. A counter example are so-called ``logical'' GHZ state. Suppose we replace every physical particle by an $O(1)$ group of qubits. On this group, one defines two orthogonal states $\left| 0_L \right\rangle $ and $\left| 1_L \right\rangle $. The logical GHZ state is then defined as
\begin{equation}
\label{eq:50}
\left| \mathrm{GHZ}_L \right\rangle = \frac{1}{\sqrt{2}}\left( \left| 0_L \right\rangle ^{\otimes N}+\left| 1_L \right\rangle ^{\otimes N} \right).
\end{equation}
The effective size of those logical GHZ states is $N$ due to the Fisher information, but also due to other measures \cite{p-index,BM,q-index,KWDC,MAvD} that are discussed in this paper. However, there exist examples like the so-called ``Concatenated GHZ state'' \cite{cGHZprl} that exhibit much more stable coherences than the standard GHZ state. Therefore, the mere decay rate of the coherences is not a suitable measure for macroscopic quantum states.

In \cite{DSC}, a second criterion for macroscopic superpositions was suggested. From the generalised GHZ state, one can distill a GHZ state of approximately $\epsilon^2N $ qubits. As already discussed in section \ref{sec:discussion}, this approach uses LOCC operations and is therefore in contrast to other proposals discussed in this paper.

Finally, the measure proposed in \cite{LJ} is defined for continuous variable systems. A direct application to qubit systems is not given and we therefore focus on the other proposals in the following.

\subsection{Indices p and q}
\label{sec:index-p-q}

In classical statistical mechanics, we normally deal with probability distributions in phase space (here we assume $N$ particles) whose uncertainties with respect to a given observable are small compared to its spectral radius. More quantitatively, the expectation values of extensive variable (like the magnetisation $M$ of a ferro-magnet) scales linearly with the system size $\mu \equiv \langle M \rangle_{\mathrm{cl}} = O(N)$, while for the standard deviation $\left( \Delta M \right)_{\mathrm{cl}} = \sqrt{\langle (M-\mu)^2 \rangle_{\mathrm{cl}}}$ we have $(\Delta M)_{\mathrm{cl}}=O(\sqrt{N})$. This is because for typical classical probability distributions in statistical mechanics the particles are uncorrelated.

In quantum mechanics, there exist pure states like the GHZ state (\ref{eq:8}) that exhibits an anomalously large standard deviation. For $M = \sum_{i=1}^N \sigma_z^{(i)}$ we find $(\Delta M)_{\mathrm{GHZ}} = N$. Based on this observation, the authors of \cite{p-index}, Shimizu and Miyadera, define a measure for macroscopic quantum systems, which they call index p: We consider a pure state $\left| \psi \right\rangle $. If there exists a local operator $A \in \mathcal{A}$ [\textit{cf.} definition \ref{defApplication} (a)] such that $\mathcal{V}_{\psi}(A) = O(N^p)$ with $p=2$, then $\left| \psi \right\rangle $ is called macroscopic. We comment on several aspects of this definition.

(1) Although the index p is motivated differently than the Fisher information, for pure states it is mathematically identical to the first part of definition \ref{defApplication} (a), since for pure states, the Fisher information is proportional to the variance, see equation (\ref{eq:9}).

(2) The index p does \textit{a priori} not give a quantification in form of an effective size. The discussion of the generalised GHZ state in section \ref{sec:discussion} nevertheless suggests also for the index p an assignment of $N_{\mathrm{eff}}$ as for the Fisher information, see equation (\ref{eq:13}) and the discussions in section \ref{sec:discussion}. We define
\begin{equation}
\label{eq:40}
N_{\mathrm{eff}}^{\mathrm{S}}(\psi) = \max_{A\in \mathcal{A}} \mathcal{V}_{\psi}(A)/N.
\end{equation}
With this definition, one has $N_{\mathrm{eff}}^{\mathrm{S}} =N_{\mathrm{eff}}^{\mathrm{F}} $.

(3) Regarding the Fisher information, the index p in its original definition is insensitive against long-range quantum correlations between blocks rather than between single particles. As already mentioned in \cite{Morimae10}, it therefore makes sense to extend the index p to sums of local operators that act non-trivially on groups of $O(1)$ qubits, as made precise in definition \ref{DefExtension} (b).

(4) The index p is not appropriate for mixed states and makes intuitively wrong statements for those. For instance, the mixture $\rho=\frac{1}{2} \left| 0 \rangle\!\langle 0\right| ^{\otimes N} + \frac{1}{2} \left| 1 \rangle\!\langle 1\right| ^{\otimes N}$ gives the same standard deviation as the GHZ state for the same observable $M$. To have a measure for mixed states at hand, in \cite{q-index} the so-called index q for general states was proposed. For pure states, the indices p and q qualify the same set of quantum states to be macroscopic. We give here a slightly modified definition, which was later published \cite{q-ind-altern}: Consider a quantum state $\rho$ and a local observable $A \in \mathcal{A}$. Calculate the trace norm of $C(\rho) = \left[ A,\left[ A,\rho \right] \right]$, $\lVert C(\rho) \rVert_1 = O(N^q)$. If there exist any $A\in \mathcal{A}$ such that $q=2$, then $\rho$ is a called a macroscopic quantum state.

Since the scaling of the Fisher information and the index q coincide for pure states, it is interesting whether they also coincide for mixed states. This question is open at the moment. By means of numerical studies we find that in general the difference of Fisher information for a given $\rho$ and the trace norm of $C(\rho)$ is indefinite, that is, the Fisher information cannot be bounded by the index q and vice versa.

\subsection{Relative improvement of interference experiments}
\label{sec:relat-fish-inform}

The following work of Bj\"ork and Mana \cite{BM} considers superpositions like the quantum state in equation (\ref{eq:1}). The authors try to give an operational interpretation for their measure. They consider a concrete experiment and a ``natural'' Hamiltonian $H$ associated with it. The main idea is that the constituting states $\left| \psi_0 \right\rangle $ and $\left| \psi_1 \right\rangle $ of equation (\ref{eq:1}) should be of semi-classical nature whereas the composed state (\ref{eq:1}) 
shows significant advantages for interferometric applications, that is, increased phase sensitivity. In mathematical terms, the state evolves under the time-independent Hamiltonian $H$, $\left|\psi(t)  \right\rangle  = \exp(-i Ht/\hbar)\left| \psi \right\rangle $.
Let us consider the minimal time $\theta(\psi)$ for which the state $\left| \psi(t) \right\rangle $ evolves to an orthogonal state. We call $\theta$ orthogonalization time. Bj\"ork and Mana demand that a macroscopic superposition $\left| \psi \right\rangle $ should exhibit a $\theta( \psi )$ that is much smaller than the orthogonalization time for the constituting states $\left| \psi_0 \right\rangle $ and $\left| \psi_{1} \right\rangle $. They therefore define\footnote{Note that the original quantity that was considered in \cite{BM} was defined as $\sqrt{N_{\mathrm{eff}}^{\mathrm{B}}}$. However, we will refer to (\ref{eq:41}) as the effective size due to Bj\"ork and Mana for an easier comparison to the other proposals.}
\begin{equation}
  \label{eq:24}
  N_{\mathrm{eff}}^{\mathrm{B}}(\psi) = \left(  \frac{\theta(  \psi_0)+\theta( \psi_1  )}{\theta( \psi  )}\right)^2.
\end{equation}
as measure for the effective size of a superposition (\ref{eq:1}). It has been shown in \cite{BM} that this expression can be approximated to ease the calculation of equation (\ref{eq:24}), which prove difficult in general 
\begin{equation}
\label{eq:41}
N_{\mathrm{eff}}^{\mathrm{B}}(\psi) \approx \left(  \frac{\left| \langle H \rangle_{\psi_0}- \langle H \rangle_{\psi_1} \right|}{\sqrt{\mathcal{V}_{\psi_0}(H)}+\sqrt{\mathcal{V}_{\psi_1}(H)}}\right)^2
\end{equation}

The expression (\ref{eq:24}) is well defined only if $\left| \psi_0 \right\rangle $ and $\left| \psi_1 \right\rangle $ exhibit finite orthogonalization times. For equation (\ref{eq:41}),  $\left| \psi_0 \right\rangle $ and $\left| \psi_1 \right\rangle $ should not be eigenstates of $H$. To circumvent this problem, we suggest to maximise the denominator and the numerator of equation (\ref{eq:41}) separately over all ``realistic'' Hamiltonians. As already discussed in section \ref{sec:sett-basis-cons}, this reduces the set of possible $H$ to sums of local terms, that is, the set $\mathcal{A}$ from definition \ref{DefExtension} (b). The maximal scaling of $N_{\mathrm{eff}}^{\mathrm{B}}$ is with $O(N)$. Therefore, we define the superposition $\left| \psi \right\rangle $ to be macroscopic, if $N_{\mathrm{eff}}^{\mathrm{B}} = O(N)$.

\subsection{Local distinguishability}
\label{sec:local-dist}

We now review the work of Korsbakken and coworkers \cite{KWDC}. Their notion of the effective size of a macroscopic superposition is defined as follows: Given the superposition of equation (\ref{eq:1}) with $\langle \psi_0 | \psi_1\rangle = o(1)$ (which means that the overlap vanishes for large $N$), we divide the particles in a maximal number $n$ of distinct groups such that we can distinguish $\left| \psi_0 \right\rangle$ from $\left| \psi_1 \right\rangle $ with ``high'' probability $P=1-\delta$ by measuring only \textit{one} of those groups. Then, the effective size is defined as the number of such groups
\begin{equation}
N_{\mathrm{eff}}^{\mathrm{K}}(\psi) = n.\label{eq:29}
\end{equation}
A quantum state (\ref{eq:1}) is called macroscopic if $n = O(N)$.

In other words, we demand that the state (\ref{eq:1}) is \textit{locally distinguishable}. We pick a group $i$ of qubits such that we can distinguish between $\left| \psi_0 \right\rangle$ from $\left| \psi_1 \right\rangle $. Then, the ranges of the reduced density operators\footnote{Let $\rho$ be an $N$ qubit state and  $X$ be a subset of some qubits. The expression $\mathrm{Tr}_{X}\rho$ denotes the tracing over the qubits of $X$; $\mathrm{Tr}_{N\setminus X}\rho$ means that we trace over all qubits \textit{but} those of the set $X$.}  $\rho^i_{k} = \mathrm{Tr}_{N\setminus i} \left| \psi_k \rangle\!\langle \psi_k\right| $ ($k=0,1$) should be distinct, that is, $1/2 \lVert \rho_{0}^i - \rho_{1}^k\rVert_1 \geq 1-2\delta$.

This notion of macroscopic superposition is the only one in this paper that depends on a parameter: $\delta$ is the maximal failure probability measuring one block. The specific choice of $\delta$ influences in general the exact value of $N_{\mathrm{eff}}^{\mathrm{K}}$. For instance, the generalised GHZ state of example \ref{ex:generalized-ghz} gives $N_{\mathrm{eff}}^{\mathrm{K}}(\phi_{\epsilon})=N \frac{\log(\cos^2\epsilon)}{\log[4\delta(1-\delta)]} \approx N \epsilon^2 /[-\log(\delta)]$ for small $\epsilon$ and $\delta$. However, in this example, the choice of $\delta$ does not affect much the effective size, since it gives only a logarithmic factor. Even if we demand $\delta = O(N^{-1})$, the effective size would be $N_{\mathrm{eff}}^{\mathrm{K}}(\phi_{\epsilon})  \approx N \epsilon^2 /\log(N)$. On the other hand, consider
\begin{Exp}
  [PS + Domain Wall]\label{ex:ps-domainwall}
  We equally superpose the quantum states $\left| \psi_0 \right\rangle = \left| 0 \right\rangle ^{\otimes N}$ and
  \begin{equation}
    \left| \psi_1 \right\rangle = \frac{1}{\sqrt{N+1}} \sum_{k=0}^{N}\left| 1 \right\rangle^{\otimes k} \otimes \left| 0 \right\rangle ^{\otimes N-k}. 
    \label{eq:43}
\end{equation}
The state $\left| \psi_1 \right\rangle $ is called domain wall state. If we want to distinguish between $\left| \psi_0 \right\rangle $ and $\left| \psi_1 \right\rangle $ and measure a qubit $k$ by means of $\sigma_z$, we do not improve the success probability $P$ if we enlarge the group by another qubit $j>k$. The best strategy to group the qubits is to calculate $P$ for the first qubit, which is gives the highest $P$, and go on with the second, for which $P$ is reduced, and so on. For an arbitrary qubit $k$, one has $P = 1-\frac{k}{2(N+1)}$. We stop if for a certain qubit $j+1$, $P$ drops below the threshold $1-\delta$. So $j$ is the number of groups we find such that measuring any group let us distinguish between $\left| \psi_0 \right\rangle $ and $\left| \psi_1 \right\rangle $ with a probability of at least $P= 1 -\delta$. The effective size is therefore $N_{\mathrm{eff}}^{\mathrm{K}} = 2\delta (N+1)$.
\end{Exp} 

While it is obvious that $\delta$ should be small, that is, $\delta \ll 1$, the actual scaling one should demand is not entirely clear. In example \ref{ex:ps-domainwall}, for any choice $\delta = O(1)$ one would call $\left| \psi_0 \right\rangle + \left| \psi_1 \right\rangle $ a macroscopic state, whereas $\delta=1/N$ leads to a microscopic qualification.

We discuss this ambiguity of this measure by speculating about the idea behind this definition. We start with the original gedankenexperiment of Sch\"odinger's cat \cite{Schroedinger35} and Leggett's demand to call a superposition macroscopic if its constituents are macroscopically distinct \cite{Leg80}. On a macroscopic level, the distinction between two states $\left| \psi_0 \right\rangle $ and $\left| \psi_1 \right\rangle $ may be limited to sums of local measurements. We demand that the probability $P(\mathrm{error})$ for failing in distinguishing $\left| \psi_0 \right\rangle $ from $\left| \psi_1 \right\rangle $ may go to zero in the limit of large $N$, for instance, $P(\mathrm{error}) = 1/O(N)$. In \cite{DiscriminationBound}, it was shown that $P(\mathrm{error}) \leq \frac{1}{\Delta^2}$ with 
\begin{equation}
\label{eq:44}
\Delta = \frac{\left| \langle A \rangle_{\psi_0} - \langle A \rangle_{\psi_1} \right|}{\sqrt{\mathcal{V}_{\psi_0}(A)}+\sqrt{\mathcal{V}_{\psi_1}(A)}}.
\end{equation}
Observe that the quantity $\Delta^2$ is formally very similar to the effective size of Bj\"ork and Mana, equation (\ref{eq:41}), except that there we considered a time evolution under a Hamiltonian $H$ and here, in equation (\ref{eq:44}), the operator $A$ refers to an observable. In addition, for equation (\ref{eq:41}) we argued that we should maximise the numerator and the denominator separately in order to avoid divergences of $N_{\mathrm{eff}}^{\mathrm{B}}$. Here, we just need $\Delta^2 \geq O(N)$, in order to meet the requirement of a good distinguishability of  $\left| \psi_0 \right\rangle $ and $\left| \psi_1 \right\rangle $ with respect to $A$.

Now, the local distinctness between $\left| \psi_0 \right\rangle $ and $\left| \psi_1 \right\rangle $ as demanded by Korsbakken \textit{et al.}~leads for all choices of $\delta < 1/2$ to $\left| \langle A \rangle_{\psi_0} - \langle A \rangle_{\psi_1} \right|=O(N)$. To have $\Delta^2 \geq O(N)$, we need $\mathcal{V}_{\psi_i}(A) \leq O(N)$, $i=0,1$. This can be achieved by two different ways. Either we have no long-range quantum correlations between the local groups we are measuring on, that is,for almost all pairs of groups $(k,j)$ we have $\langle A^{(k)}A^{(j)} \rangle_{\psi_i} - \langle A^{(k)} \rangle_{\psi_i} \langle A^{(j)} \rangle_{\psi_i} = o(1)$. Then a finite $\delta=O(1)$ is enough to guarantee $P(\mathrm{error}) = 1/O(N)$. Or there exist $O(1)$ for $O(N^2)$ pairs of groups. Then the variances $\mathcal{V}_{\psi_i}(A)$ become large and we need $\delta = 1/O(N)$.

This insight is now applied to example~\ref{ex:ps-domainwall}. Since $\left| \psi_1 \right\rangle $ exhibits a large variance under the optimal observable, $\mathcal{V}_{\psi_i}(A) \approx N^2/12$, we would need $\delta = 1/O(N)$ and therefore, with this reasoning, we would conclude that this superposition is microscopic.

\subsection{Orthogonality with respect to local rotations}
\label{sec:orth-with-resp}
In a similar spirit as Korsbakken \textit{et al.}, the authors of \cite{MAvD}, Marquardt and coworkers, try to catch the idea of ``macroscopic distinctness''. They agree with preliminary works that the mere particle number of a quantum state is not enough to call the state macroscopic. Two orthogonal states of macroscopic size could differ only in a microscopic detail and their superposition should hence not be counted to the set of macroscopic superpositions. The actual ansatz they follow is to count the number of one particle operators one has to apply to $\left| \psi_0 \right\rangle$ \textit{in average} in order to reach a unity overlap with $\left| \psi_1 \right\rangle$. The mathematical definition of the effective size is described by these steps:

(1) Start with the one-dimensional subspace $\mathcal{H}_0$, spanned by $\left| \psi_0 \right\rangle$. Then all subspaces $\mathcal{H}_i$ with $i>0$ are generated iteratively.

(2) Take $\mathcal{H}_{i-1}$ and apply all possible one-particle operations on this subspace. This results in a set whose linear span is called $\tilde{\mathcal{H}}_i$.

(3) From this, define  $\mathcal{H}_i = \tilde{\mathcal{H}}_i \setminus \bigoplus_{k = 0}^{i-1} \mathcal{H}_k$. We assume that there is finite number of subspaces $\mathcal{H}_i$ such that the whole Hilbert space $\mathcal{H}$ is covered. To every $\mathcal{H}_i$ we define an orthogonal projection $P_i$. One has $\sum_iP_i = \mathbbm{1}$.

(4) Decompose the second state $\left| \psi_1 \right\rangle$ with respect to this division, that is
\begin{equation}
  \left| \psi_1 \right\rangle  = \sum_i \nu_i \left| \phi_i \right\rangle 
  \label{eq:27}
\end{equation}
with $P_i \left| \psi_1 \right\rangle = \nu_i \left| \phi_i \right\rangle$ such that $\left| \phi_i \right\rangle$ has unity norm.

(5) Define the effective size as the expectation value of the operator
\begin{equation}
  \label{eq:28}
  \mathcal{N} = \sum_kk \left| \phi_k \rangle\!\langle \phi_k \right| 
\end{equation}
under the second state, so one has
\begin{equation}
  N_{\mathrm{eff}}^{\mathrm{M}}(\psi) = \langle \mathcal{N} \rangle_{\psi_1}.\label{eq:25}
\end{equation}
The state $\left| \psi \right\rangle = \frac{1}{\sqrt{2}} \left( \left| \psi_0  \right\rangle +\left| \psi_1 \right\rangle \right)$ is called macroscopic, if $N_{\mathrm{eff}}^{\mathrm{M}}(\psi) = O(N)$.

As already noticed by the authors, there is a ambiguity in definition (\ref{eq:25}) of macroscopic states. The problem is that there exist superpositions $\left| \psi_0 \right\rangle +\left| \psi_1 \right\rangle $, where the decomposition (\ref{eq:27}) of $\left| \psi_1 \right\rangle$ with respect to the Hilbert space structure based on $\left| \psi_0 \right\rangle$ gives a different $N_{\mathrm{eff}}^{\mathrm{M}}$ than if we calculate the ``local distance'' of $\left| \psi_0 \right\rangle$ with respect to $\left| \psi_1 \right\rangle$. In spin systems, this can always happen if the entanglement structure of $\left| \psi_0 \right\rangle$ and $\left| \psi_1 \right\rangle$ differ. This is because with local projections, which can be written as a sum of identity and a Pauli operator, a highly entangled state can be quickly transformed into a product state. We illustrate this by the state of example~\ref{ex:ps-domainwall}. To reach $\left| \psi_1 \right\rangle $ with one particle operations from $\left| \psi_0 \right\rangle $ we need in average $\langle \mathcal{N} \rangle_{\psi_1}=\frac{N+1}{2}$ operations, since $\left| \psi_1 \right\rangle =1/\sqrt{N+1} \sum_{i=0}^N \left| \phi_i \right\rangle $. In contrast, we find $\left| \psi_0 \right\rangle = \frac{1}{2}\sqrt{N+1} \left| 0 \rangle\!\langle 0\right| \otimes \mathbbm{1}^{\otimes N-1} \left| \psi_1 \right\rangle = \frac{1}{2}\sqrt{N+1}\left( \mathbbm{1}^{\otimes N}+\sigma_z^{(1)} \right)\left| \psi_1 \right\rangle $, that is~$\langle \mathcal{N} \rangle_{\psi_0}<1$. In cases like example~\ref{ex:ps-domainwall}, the measure is not defined. Whenever the global entanglement structure of $\left| \psi_0 \right\rangle $ and $\left| \psi_1 \right\rangle $ are identical, we find at least the same order of $\langle \mathcal{N} \rangle_{\psi_0}$ and $\langle \mathcal{N} \rangle_{\psi_1}$. For the comparison in section \ref{sec:relat-among-meas}, we assume therefore that there exist local unitary operations $U_i$ acting on $O(1)$ qubits such that  $\left| \psi_1 \right\rangle = U_1\otimes \ldots \otimes U_n \left| \psi_0 \right\rangle $.
 
\section{Relations among the measures}
\label{sec:relat-among-meas}

In this section, we study the relations among macroscopic quantum states due to the respective measures. We find that for qubit systems we can establish a hierarchy among these measures. An important consequence is that measures \cite{BM,KWDC,MAvD} and the relative Fisher information, which are defined for superpositions of two orthogonal states [\textit{cf.} equation (\ref{eq:1})], detect a strict subclass of general macroscopic quantum states as detected by the index p \cite{p-index} and the Fisher information (\textit{cf.} section \ref{sec:fish-inform-as}).

With the Fisher information and the relative Fisher information, we introduced two proposals that are well defined for all quantum states and superposition states, respectively. In contrast, we encountered some ambiguities for the measures of Bj\"ork and Mana, Korsbakken \textit{et al.}~and of Marquardt \textit{et al.}. Therefore, we have to make some assumptions. As suggested in section \ref{sec:relat-fish-inform}, we maximise the numerator and denominator of equation (\ref{eq:41}) over all $A \in \mathcal{A}$ separately. For the local distinguishability in section \ref{sec:local-dist}, we demand a constant $\delta=O(1)$ for the success probability and only short-range quantum correlations for the quantum states $\left| \psi_0 \right\rangle $ and $\left| \psi_1 \right\rangle $ with respect to the local observable $A$. To circumvent problems of section \ref{sec:orth-with-resp}, we assume that there exist $n=O(N)$ distinct groups such that $\left| \psi_1 \right\rangle = U_1\otimes \ldots \otimes U_n \left| \psi_0 \right\rangle $.

We now summarise these observations. The proposition~\ref{Pro:Relations} is illustrated in figure \ref{fig:Hierarchy}. The results are proven at the end of the section.

\begin{Pro}
  \label{Pro:Relations}
Let $\left| \psi \right\rangle $ be an $N$ qubit quantum state. For all measures that rely on a superposition structure of $\left| \psi \right\rangle $, we have $\left| \psi \right\rangle = 1/\sqrt{2(1 + \mathrm{Re}\langle \psi_0 | \psi_1\rangle) }\left( \left| \psi_0 \right\rangle +\left| \psi_1 \right\rangle  \right)$  with $\langle \psi_0 | \psi_1\rangle = o(1)$ (i.e., the states become orthogonal in the limit of large $N$). For all discussed measures, a quantum state is considered to be macroscopic, if $N_{\mathrm{eff}}(\psi) = O(N)$. Under the assumptions formulated in the beginning of this section, then one has
\begin{enumerate}
\item\label{item:2} $N_{\mathrm{eff}}^{\mathrm{B}}(\psi) = O(N) \Leftrightarrow N_{\mathrm{eff}}^{\mathrm{RF}}(\psi) = O(N)$, [\textit{cf.} equations (\ref{eq:46}) and (\ref{eq:41})],

\item\label{item:1} $N_{\mathrm{eff}}^{\mathrm{B}}(\psi) = O(N) \Rightarrow N_{\mathrm{eff}}^{\mathrm{K}}(\psi) = O(N)$, [\textit{cf.} equations (\ref{eq:41}) and (\ref{eq:29})],
\item\label{item:3} $ N_{\mathrm{eff}}^{\mathrm{K}}(\psi) = O(N) \Leftrightarrow N_{\mathrm{eff}}^{\mathrm{M}}(\psi) = O(N)$, [\textit{cf.} equations (\ref{eq:29}) and (\ref{eq:25})], and
\item\label{item:4} $ N_{\mathrm{eff}}^{\mathrm{K}}(\psi) = O(N) \Rightarrow N_{\mathrm{eff}}^{F}(\psi) = O(N)$, [\textit{cf.} equations (\ref{eq:29}) and (\ref{eq:19})]. 

  \end{enumerate}
  \end{Pro}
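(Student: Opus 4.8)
The plan is to run all four parts through two common devices. \emph{First}, a variance identity for an equally weighted, near-orthogonal superposition: writing $\bar a_k:=\langle A\rangle_{\psi_k}$, one has for $|\psi\rangle=(|\psi_0\rangle+|\psi_1\rangle)/\sqrt{2(1+\mathrm{Re}\langle\psi_0|\psi_1\rangle)}$ with $\langle\psi_0|\psi_1\rangle=o(1)$
\begin{equation}
\mathcal{V}_\psi(A)=\tfrac12\big[\mathcal{V}_{\psi_0}(A)+\mathcal{V}_{\psi_1}(A)\big]+\tfrac14(\bar a_0-\bar a_1)^2+R(A),
\end{equation}
with $R(A)$ collecting the cross terms $\mathrm{Re}\langle\psi_0|A^k|\psi_1\rangle$ ($k=1,2$) and the $o(1)$ normalisation corrections. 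Two facts will be used: (a) taking $c=\tfrac12(\bar a_0+\bar a_1)$ in $\mathcal{V}_\psi(A)=\min_c\langle\psi|(A-c)^2|\psi\rangle$ together with a Cauchy--Schwarz estimate gives, for \emph{every} $A$, $\mathcal{V}_\psi(A)\le\mathcal{V}_{\psi_0}(A)+\mathcal{V}_{\psi_1}(A)+\tfrac12(\bar a_0-\bar a_1)^2+o(N^2)$; (b) when $\mathcal{V}_{\psi_0}(A),\mathcal{V}_{\psi_1}(A)=O(N)$ while $|\bar a_0-\bar a_1|=\Theta(N)$, the cross-term bound $R(A)=o(N^2)$ holds (established at the end), so $\mathcal{V}_\psi(A)=\tfrac14(\bar a_0-\bar a_1)^2\,(1+o(1))=\Theta(N^2)$. \emph{Second}, an elementary dictionary via the Helstrom bound: for a local operator $A^{(i)}$ with $\lVert A^{(i)}\rVert=1$ one has $|\langle A^{(i)}\rangle_{\psi_0}-\langle A^{(i)}\rangle_{\psi_1}|\le\lVert\rho_0^i-\rho_1^i\rVert_1$, with equality when $A^{(i)}$ is the sign operator of $\rho_0^i-\rho_1^i$, and that choice distinguishes $|\psi_0\rangle$ from $|\psi_1\rangle$ on block $i$ with success probability $\tfrac12+\tfrac14\lVert\rho_0^i-\rho_1^i\rVert_1$. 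Throughout I keep the running hypotheses of the section: separate maximisation of numerator and denominator in~(\ref{eq:41}); a fixed $\delta=O(1)$ and short-range correlations of $|\psi_0\rangle,|\psi_1\rangle$ with respect to the relevant local observable; and $|\psi_1\rangle=U_1\otimes\cdots\otimes U_n|\psi_0\rangle$ with $n=O(N)$ groups of size $O(1)$.

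For item~(\ref{item:1}): the maximised numerator of~(\ref{eq:41}) is $\le 4n^2$ and the maximised denominator lies between $n$ and $4n^2$, so $N_{\mathrm{eff}}^{\mathrm{B}}(\psi)=O(N)$ forces the denominator to be $\Theta(N)$ and the numerator to be $\Theta(N^2)$; fix an optimal $A=\sum_iA^{(i)}\in\mathcal{A}$, giving $\sum_i\big(\langle A^{(i)}\rangle_{\psi_0}-\langle A^{(i)}\rangle_{\psi_1}\big)=\Theta(N)$. Since each summand is bounded by $2$, a counting argument yields $\Theta(N)$ blocks $i$ with $|\langle A^{(i)}\rangle_{\psi_0}-\langle A^{(i)}\rangle_{\psi_1}|\ge c$ for a fixed $c>0$; by the dictionary each distinguishes $|\psi_0\rangle$ from $|\psi_1\rangle$ with probability $\ge\tfrac12+c/4$, and absorbing the remaining $O(N)$ particles into these blocks (keeping every block of size $O(1)$, which only increases trace distances) gives $N_{\mathrm{eff}}^{\mathrm{K}}(\psi)=O(N)$. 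Item~(\ref{item:4}) runs the reverse: from $N_{\mathrm{eff}}^{\mathrm{K}}(\psi)=O(N)$ take $\Theta(N)$ disjoint $O(1)$-blocks with $\tfrac12\lVert\rho_0^i-\rho_1^i\rVert_1\ge 1-2\delta=:c'>0$, let $A^{(i)}$ be the Helstrom sign operator on block $i$ (padded by the identity), and put $A=\sum_iA^{(i)}\in\mathcal{A}$, so $\bar a_0-\bar a_1=\sum_i\lVert\rho_0^i-\rho_1^i\rVert_1\ge 2c'\cdot\Theta(N)=\Theta(N)$. The short-range-correlation hypothesis gives $\mathcal{V}_{\psi_0}(A),\mathcal{V}_{\psi_1}(A)=O(N)$, so fact (b) yields $\mathcal{V}_\psi(A)=\Theta(N^2)$; dividing by the number of blocks $n=\Theta(N)$, $N_{\mathrm{eff}}^{F}(\psi)\ge\mathcal{V}_\psi(A)/(4n)=\Theta(N)$.

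For item~(\ref{item:2}): rewrite $N_{\mathrm{eff}}^{\mathrm{rF}}(\psi)=\max_A\mathcal{V}_\psi(A)\,/\,[\tfrac12\max_A\mathcal{V}_{\psi_0}(A)+\tfrac12\max_A\mathcal{V}_{\psi_1}(A)]$ (the factors $n$ cancel). Its denominator and the maximised denominator of~(\ref{eq:41}) are each within a constant factor of $\max_A\mathcal{V}_{\psi_0}(A)+\max_A\mathcal{V}_{\psi_1}(A)$, which is $\ge 2n=\Theta(N)$ always; so both measures share a common denominator up to constants. Fact (a) gives $\max_A\mathcal{V}_\psi(A)\le\max_A\mathcal{V}_{\psi_0}(A)+\max_A\mathcal{V}_{\psi_1}(A)+\tfrac12\max_A(\bar a_0-\bar a_1)^2+o(N^2)$, and fact (b) applied at the $A$ maximising $|\bar a_0-\bar a_1|$ (whose variances are bounded by the finite denominator, hence $O(N)$ when that denominator is $\Theta(N)$) gives the matching lower bound $\max_A\mathcal{V}_\psi(A)\ge\tfrac14\max_A(\bar a_0-\bar a_1)^2\,(1+o(1))$. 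Hence when the common denominator is $\Theta(N)$ one has $N_{\mathrm{eff}}^{\mathrm{rF}}(\psi)=\Theta(1)+\Theta\big(N_{\mathrm{eff}}^{\mathrm{B}}(\psi)\big)$, and when it is $\omega(N)$ both measures are $o(N)$; either way $N_{\mathrm{eff}}^{\mathrm{rF}}(\psi)=O(N)\Leftrightarrow N_{\mathrm{eff}}^{\mathrm{B}}(\psi)=O(N)$.

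Item~(\ref{item:3}) carries the real work, and I expect it to be the main obstacle. Writing $\mu_i:=\langle\psi_0|U_i|\psi_0\rangle$, the state $(U_i-\mu_i\mathbbm{1})|\psi_0\rangle$ is orthogonal to $|\psi_0\rangle$, has squared norm $1-|\mu_i|^2$, and lies in $\bigoplus_{k=1}^{O(1)}\mathcal{H}_k$. The target is that both $N_{\mathrm{eff}}^{\mathrm{K}}$ and $N_{\mathrm{eff}}^{\mathrm{M}}$ are of order $w:=\sum_i(1-|\mu_i|^2)$ (the sum over the $U_i$-blocks). For $N_{\mathrm{eff}}^{\mathrm{M}}$: expand $\bigotimes_iU_i=\prod_i(\mu_i\mathbbm{1}+(U_i-\mu_i\mathbbm{1}))$; the term indexed by a subset $S$ lies in $\bigoplus_{k\le O(|S|)}\mathcal{H}_k$ with squared weight $\approx\prod_{i\notin S}|\mu_i|^2\prod_{i\in S}(1-|\mu_i|^2)$, so were these terms mutually orthogonal one would get $\langle\mathcal{N}\rangle_{\psi_1}=\Theta(\mathbb{E}|S|)=\Theta(w)$ for the product Bernoulli law; the genuine argument replaces orthogonality by the short-range decay of the connected correlators $\langle U_iU_j\rangle_{\psi_0}-\mu_i\mu_j$. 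For $N_{\mathrm{eff}}^{\mathrm{K}}$: since trace distance is monotone under partial trace, merging $O(1)$ many blocks only increases their trace distance; one shows the merged trace distance exceeds $1-2\delta$ once the accumulated weight $\sum_{i\in\text{merged}}(1-|\mu_i|^2)$ reaches a fixed constant, while conversely no $O(1)$-block with trace distance $\ge 1-2\delta$ carries less than a fixed amount of that weight, so the maximal number of admissible $O(1)$-blocks is $\Theta(w)$. Comparing, $N_{\mathrm{eff}}^{\mathrm{K}}(\psi)=O(N)\Leftrightarrow w=\Theta(N)\Leftrightarrow N_{\mathrm{eff}}^{\mathrm{M}}(\psi)=O(N)$. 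The delicate points here are that the non-orthogonality of the subset-terms forces a genuine (short-range-correlation) estimate in place of the Bernoulli heuristic, and that all merging must keep the blocks of size $O(1)$. Finally, the cross-term bound $R(A)=o(N^2)$ used in facts (a)/(b) follows from a Chebyshev plus Cauchy--Schwarz estimate that splits the spectrum of $A=\sum_aaP_a$ into the bulks of $|\psi_0\rangle$ and of $|\psi_1\rangle$ and the remainder, bounding $\sum_aa^k|\langle\psi_0|P_a|\psi_1\rangle|\le\sum_aa^k\sqrt{q_0(a)q_1(a)}$ with $q_k(a)=\lVert P_a\psi_k\rVert^2$ by means of $\mathcal{V}_{\psi_k}(A)=O(N)$, $\lVert A\rVert=O(N)$ and $|\bar a_0-\bar a_1|=\Theta(N)$.
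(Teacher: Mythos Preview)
Your treatment of parts~\ref{item:2}, \ref{item:1} and~\ref{item:4} is essentially the paper's argument, dressed in a more explicit variance identity and a cleaner Helstrom dictionary. The paper also reduces everything to the pair of conditions $|\langle A\rangle_{\psi_0}-\langle A\rangle_{\psi_1}|=\Theta(N)$ and $\mathcal V_{\psi_k}(A)=O(N)$, and then reads off the implications; your Chebyshev--Cauchy--Schwarz control of the cross terms $R(A)$ is more careful than what the paper writes, and your counting argument in~\ref{item:1} is the same pigeonhole as theirs. One small omission in~\ref{item:1}: from $\tfrac12\lVert\rho_0^i-\rho_1^i\rVert_1\ge c/2$ you only get success probability $\ge \tfrac12+c/4$, which need not exceed $1-\delta$. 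The paper closes this by merging $m=O(1)$ of the good blocks and using the $O(\sqrt N)$ standard deviation to drive the error down; your ``absorbing the remaining particles'' step does not obviously do this.

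Part~\ref{item:3} is where you depart from the paper. The paper argues directly from the definitions: $N_{\mathrm{eff}}^{\mathrm K}=O(N)$ means $O(N)$ blocks with nearly orthogonal reduced states, hence a nontrivial rotation is forced on each, hence $N_{\mathrm{eff}}^{\mathrm M}=O(N)$; and conversely. You instead route both measures through the weight $w=\sum_i(1-|\mu_i|^2)$ with $\mu_i=\langle\psi_0|U_i|\psi_0\rangle$. The Marquardt half of this is plausible (your product expansion and Bernoulli heuristic are the natural picture, and you correctly flag the short--range--correlation estimate needed to replace orthogonality). The Korsbakken half, however, has a genuine gap: accumulated weight does \emph{not} control local trace distance in the direction you need. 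If $\rho_0^i$ is maximally mixed (block $i$ entangled with its complement) and $U_i$ is any nontrivial unitary, then $1-|\mu_i|^2$ is of order one while $\rho_1^i=U_i\rho_0^iU_i^\dagger=\rho_0^i$, so no amount of merged weight forces the trace distance above $1-2\delta$. The standing short--range--correlation hypothesis is only imposed \emph{with respect to the local observable $A$}, not on all observables, so it does not by itself rule this out. The paper's direct argument avoids this pitfall because it never passes through $w$: it works with $\lVert\rho_0^i-\rho_1^i\rVert_1$ on both sides, which is exactly what $N_{\mathrm{eff}}^{\mathrm K}$ measures and exactly what forces a nontrivial one--particle rotation in the Marquardt construction. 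If you want to keep the $w$ machinery, you would need an additional hypothesis (e.g.\ that the reduced states $\rho_0^i$ are bounded away from maximally mixed, or that the $U_i$--correlators themselves are short ranged) and to argue that under it $1-|\mu_i|^2$ and $\lVert\rho_0^i-\rho_1^i\rVert_1^2$ are comparable on $O(1)$ blocks.
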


  We therefore identify three different classes of macroscopic quantum states.
  
  (1) Bj\"ork and Mana/ relative Fisher information: quantum states that are superpositions of two ``classical'' (i.e., microscopic) states.  This is the most restrictive class of macro-states and is closest to the original Schr\"odinger cat.

(2) Korsbakken \textit{et al.}/ Marquardt \textit{et al.}: macroscopic superposition of locally distinct quantum states. Apart from the relation between the two states $\left| \psi_0 \right\rangle $ and $\left| \psi_1 \right\rangle $ there are no restrictions on them. Hence, these states can show true quantum effects with respect to other criteria (see example \ref{Ex:quantum-classical} later).

  (3) Index p/ Fisher information: \textit{a priori} no restrictions on $\left| \psi \right\rangle $. Quantum states of this category may not have anything in common with the original Schr\"odinger cat gedanken experiment, except that they can show counter-intuitive, macroscopic quantum effects.

\begin{figure}[htbp]
\centerline{\includegraphics[width=.5\columnwidth]{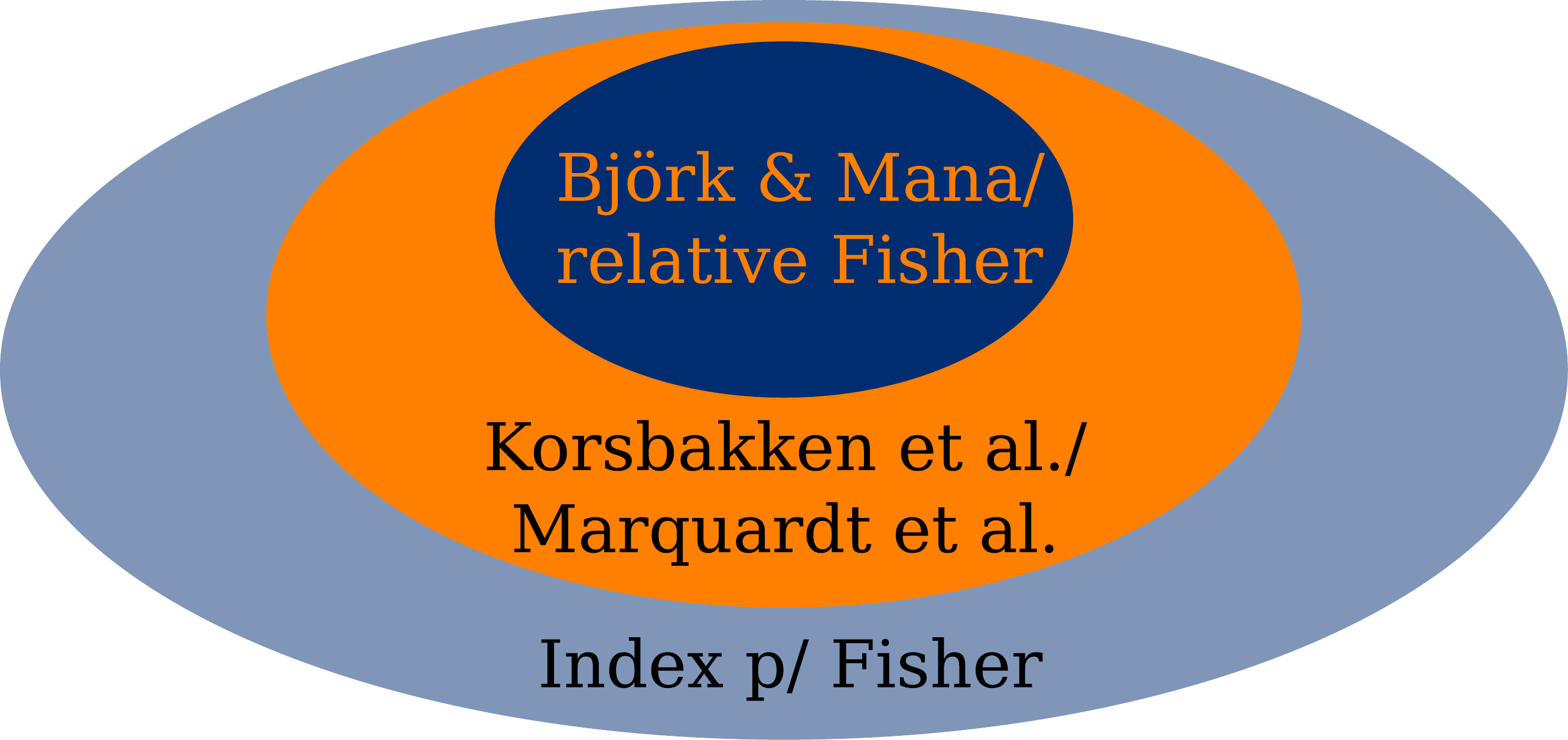}}
\caption[]{\label{fig:Hierarchy} Illustration of proposition~\ref{Pro:Relations}. The sets of macroscopic states detected by the respective measures are sketched by ellipses. One clearly identifies three different levels of macroscopic quantum states: superpositions of classical (Bj\"ork and Mana, relative Fisher information) and quantum states (Korsbakken \textit{et al.}~and Marquardt \textit{et al.}) and genuine macroscopic quantum states (Index p and Fisher information).} 
\end{figure}

The idea of the relative Fisher information introduced with definition \ref{def:relative-Fisher} is very similar to the proposal of Bj\"ork and Mana. And indeed, the connection between $N_{\mathrm{eff}}^{\mathrm{B}}$ and $N_{\mathrm{eff}}^{\mathrm{rF}}$ can be made more rigorous with part \ref{item:2} of proposition \ref{Pro:Relations}. It means that $\left| \psi \right\rangle $ is macroscopic due to Bj\"ork and Mana if and only if it is macroscopic due to the relative Fisher information. These two definitions therefore detect the same set of macroscopic quantum states.

Next, it is claimed in part \ref{item:1} that whenever a superposition is macroscopic due to Bj\"ork and Mana [and the relative Fisher information, see part \ref{item:2}], it is also macroscopic due to Korsbakken \textit{et al.}~[and Marquardt \textit{et al.}, see part \ref{item:3}]. The difference is that the first group demands that $\left| \psi_0 \right\rangle $ and $\left| \psi_1 \right\rangle $ are classical with respect to \textit{every} possible local measurement $A$. Here, by classical is meant that the standard deviation of $A$ for $\left| \psi_0 \right\rangle $ and $\left| \psi_1 \right\rangle $ is small (\textit{i.e.}, $O(\sqrt{N})$) compared to the spectral radius $\sigma$ of $A$. Therefore, in the limit of large $N$, the ratio $\sqrt{\mathcal{V}_{\psi_i}(A)}/\sigma(A) = 1/O(\sqrt{N})$ vanishes.

On the other side, the measures of Korsbakken \textit{et al.}~and Marquardt \textit{et al.}~are not so restrictive. There are local measurements for which the total state $\left| \psi_0 \right\rangle + \left| \psi_1 \right\rangle $ behaves macroscopically quantum but the single constituents do not. In contrast, there may be local observables for which also $\left| \psi_0 \right\rangle $ and $\left| \psi_1 \right\rangle $ show macroscopic behaviour, as in 
\begin{Exp}[Quantum/Classical]\label{Ex:quantum-classical}
  We divide the $N$ qubits into distinct groups of two qubits each. We relabel the basis states of one two-qubit block: $\left| \alpha_0 \right\rangle = \left| 00 \right\rangle $, $\left| \alpha_1 \right\rangle = \left| 11 \right\rangle $, $\left| \beta_0 \right\rangle = \left| 10 \right\rangle $ and $\left| \beta_1 \right\rangle = \left| 01 \right\rangle $. Consider two normalised quantum states
\begin{equation}
\label{eq:45}
\begin{split}
\left| \psi_0 \right\rangle &= \sum_{i_1,\ldots,i_{N/2} =0}^1 c_{i_1,\ldots,i_{N/2}} \left| \alpha_{i_1} \right\rangle \otimes \ldots \otimes \left| \alpha_{i_{N/2}} \right\rangle \\
  \left| \psi_1 \right\rangle &= \sum_{i_1,\ldots,i_{N/2}=0}^1 d_{i_1,\ldots,i_{N/2}} \left| \beta_{i_1} \right\rangle \otimes \ldots \otimes \left| \beta_{i_{N/2}} \right\rangle
\end{split}
\end{equation}
The superposition $\left| \psi_0 \right\rangle + \left| \psi_1 \right\rangle $ is for all choices of coefficients $c_{i_1,\ldots,i_{N/2}} \in \mathbbm{C}$ and $d_{i_1,\ldots,i_{N/2}}\in \mathbbm{C}$ macroscopic due to Korsbakken \textit{et al.}~and Marquardt \textit{et al}. For the first measure the observable $\sigma_z^{\otimes 2}$ always distinguishes perfectly between $\left| \psi_0 \right\rangle $ and $\left| \psi_1 \right\rangle $ on a local level. For the second measure, we need on every local group a unitary operation that maps the $\left\{ \alpha_i \right\}$ space to the orthogonal $\left\{ \beta_i \right\}$ space. Therefore, the distance between $\left| \psi_0 \right\rangle $ and $\left| \psi_1 \right\rangle $ in terms of local operations is in the order of $N$, i.e., the superposition is macroscopic.

In contrast, $\left| \psi_0 \right\rangle + \left| \psi_1 \right\rangle $ is in general not macroscopic due to the relative Fisher information, since the single constituents may show non-classical behaviour due to other local operators like $\sigma_x^{\otimes 2}$.
\end{Exp}

The third part of proposition \ref{Pro:Relations} states that a superposition is macroscopic due to Korsbakken \textit{et al.}~if and only if it is macroscopic due to Marquardt \textit{et al.}~under the conditions mentioned in the beginning of this section. It is clear that the absolute effective sizes can differ for both measures, which is also a consequence of the open parameter $\delta$ in the proposal of Korsbakken \textit{et al.}

Focusing on a superposition of two states $\left| \psi_0 \right\rangle $ and $\left| \psi_1 \right\rangle $ means that we implicitly assume that these states are a natural choice of quantum states with more or less classical behaviour. Dropping this restriction, we have to consider arbitrary macroscopic spin states, which leads to measures like the Fisher information or the index p. It is interesting that we find with proposition \ref{Pro:Relations} that all superposition states that are called macroscopic due to the measures of \cite{BM,KWDC,MAvD} and the relative Fisher information (\ref{eq:23}) are also macroscopic due to the more general measures \cite{p-index} and the Fisher information (definition \ref{DefExtension}). This indicates that --assuming that the last two proposals are well suited to classify macroscopic quantum states-- Schr\"odinger correctly caught the idea of macroscopic quantum phenomena with his cat-gedankenexperiment. But these phenomena can be demonstrated with much more general quantum states, which are not so illustrative any more. This means that with the help of local interactions within the system and with the environment to measure to systems, there exist other quantum states that do not rely on a superposition of two semi-classical states and still can show macroscopic quantum effects.

A counter-example that the reverse statement of part \ref{item:4} is not true is discussed extensively in \cite{ClonedCat}. We summarise it in 
\begin{Exp}
[Cloned superposition state]
  \label{Ex:Dicke}
  The eigenstates of $\sum_{i=1}^N \sigma_z^{(i)}$ with eigenvalue $N-2x$ that are completely symmetric under particle exchange are called Dicke-$x$ states $\left| N,x \right\rangle $ \cite{Dicke}. For odd $N$, we consider the superposition state $\left| \psi \right\rangle = \frac{1}{\sqrt{2}}\left( \left| \psi_0 \right\rangle  + \left| \psi_1 \right\rangle \right)$ with
\begin{equation}
\label{eq:51}
\begin{split}
  \left| \psi_0 \right\rangle &= \frac{1}{\sqrt{2}} \left( \left| N,\frac{N-1}{2} \right\rangle + \left| N,\frac{N+1}{2} \right\rangle  \right),\\
  \left| \psi_1 \right\rangle &= \frac{1}{\sqrt{2}} \left( \left| N,\frac{N-1}{2} \right\rangle - \left| N,\frac{N+1}{2} \right\rangle  \right).
\end{split}
\end{equation}
This superposition arises if an initial state $\left| \phi \right\rangle = \frac{1}{\sqrt{2}}\left( \left| 0 \right\rangle  + \left| 1 \right\rangle \right)$ is subject to an optimal phase covariant cloning device \cite{OptCloning}.

One can show that $\left| \psi \right\rangle $ has a variance that is macroscopic: $\mathcal{V}_{ \psi  }\left( \sum_{i=1}^N\sigma_x^{(i)} \right) = O(N^2)$. This state is therefore macroscopic due to the Fisher information and Shimizu and Miyadera (index p). On the contrary, the states (\ref{eq:51}) are not locally distinguishable and the superposition is therefore not macroscopic due to Korsbakken \textit{et al.} 
\end{Exp}
We now come to
\begin{proof}[Proof of proposition~\ref{Pro:Relations}]~\ref{item:2}
  We first derive the order of the denominator and the numerator of equation (\ref{eq:41}), if $N_{\mathrm{eff}}^{\mathrm{B}}(\psi)=O(N)$.
  The numerator and the denominator are individually optimised over all local Hamiltonians of the set $\mathcal{A}$ from definition~\ref{DefExtension} (b).
For convenience, the optimal operator is always denoted by $A$, even if it differs for different states. In general, $\sqrt{\mathcal{V}_{\psi_0}(A)} + \sqrt{\mathcal{V}_{\psi_1}(A)} \geq O(\sqrt{N})$, since one can always find a local operator for which the variance is at least the sum of the variances of the addends $A^{(i)}$. Furthermore, we have  $\langle A \rangle_{\psi_0}-\langle A \rangle_{\psi_1} \leq O(N)$ due to the spectral radius $\sigma(A)= N$.
  To fulfil both requirements simultaneously, one has $\sqrt{\mathcal{V}_{\psi_0}(A)} + \sqrt{\mathcal{V}_{\psi_1}(A)} = O(\sqrt{N})$ and $\langle A \rangle_{\psi_0}-\langle A \rangle_{\psi_1} = O(N)$. The same reasoning can be applied for the relative Fisher information. From $N_{\mathrm{eff}}^{\mathrm{rF}}(\psi)=O(N)$, it follows that $\mathcal{F}(\psi,A)=O(N^2)$, $\mathcal{F}(\psi_0,A)=O(N)$ and $\mathcal{F}(\psi_1,A)=O(N)$.

  We now come to the relation between $N_{\mathrm{eff}}^{\mathrm{B}}(\psi)$ and $N_{\mathrm{eff}}^{\mathrm{rF}}(\psi)$. The numerators of $N_{\mathrm{eff}}^{\mathrm{B}}(\psi)$ and $N_{\mathrm{eff}}^{\mathrm{rF}}(\psi)$ are always of the same order and we only have to look at the denominators. Let us consider $\left| \psi_i \right\rangle $ ($i=0,1$) in the eigenbasis $\left\{ \left| a_k \right\rangle  \right\}$ of $A$, $\left| \psi_i \right\rangle = \sum_k c_{ik} \left| a_k \right\rangle$. The conditions $\langle A \rangle_{\psi_0}-\langle A \rangle_{\psi_1} = O(N)$ and $\mathcal{V}_{\psi_i}(A) = O(N)$ imply that the state $\left| \psi \right\rangle $ exhibits at least two regions in the spectrum $\sigma(A)$ with non-vanishing probabilities to measure the corresponding eigenvalues. The variance of $\left| \psi \right\rangle $ reads $\mathcal{V}_{\psi}(A) = \sum_k \left| c_{0k} + c_{1k} \right|^2 (a_k - \langle A \rangle_{\psi})^2$ ($a_k$ denotes the eigenvalues of $A$). We see that $\mathcal{V}_{\psi}(A)$ contains at least two addends $k$ and $j$ with $a_k -a_j= O(N)$ and $\left| c_{0k} + c_{1k} \right|^2 = O(1)$ and $\left| c_{0j} + c_{1j} \right|^2 = O(1)$, which leads to $\mathcal{V}_{\psi}(A)= O(N^2)$. On the other side, $\mathcal{V}_{\psi}(A)= O(N^2)$ with $\mathcal{V}_{\psi_i}(A)= O(N)$ is only possible if $\langle A \rangle_{\psi_0}-\langle A \rangle_{\psi_1} = O(N)$, which establishes the equivalence of both measures in detecting macroscopic superpositions.

\ref{item:1} As we have seen in the first part of the proof, it follows from $N_{\mathrm{eff}}^{\mathrm{B}}(\psi) = O(N)$ that there exists a local operator $A \in \mathcal{A}$ such that $\left| \langle A  \rangle_{\psi_0}- \langle A \rangle_{\psi_1} \right| = O(N)$ and $\sqrt{\mathcal{V}_{\psi_0}(A)} + \sqrt{\mathcal{V}_{\psi_1}(A)} = O(\sqrt{N})$. Without loss of generality, we assume a locality of $A$ with respect to a fixed grouping into $n=O(N)$ blocks, each consisting of $O(1)$ qubits, i.e., $A=\sum_{i=1}^n A^{(i)}$. To fulfil $\left| \langle A  \rangle_{\psi_0}- \langle A \rangle_{\psi_1} \right| = O(N)$ there exist $O(N)$ blocks with $\lVert \rho_0^{i} - \rho_1^{i} \rVert_1 \geq \left| \langle A^{(i)}  \rangle_{\psi_0}- \langle A^{(i)} \rangle_{\psi_1} \right| = O(1)$. Therefore, measuring these blocks leads to a success probability $P = 1 -\epsilon$ with $\epsilon<1/2$. If $P > 1-\delta$, i.e., the demanded precision is fulfilled, then the actual grouping into $n$ blocks is the effective size of Korsbakken \textit{et al.}, $N_{\mathrm{eff}}^{\mathrm{K}} = n = O(N)$. If not, then we have to measure on a collection of $m=O(1)$ groups to improve the precision. Since the standard deviation of $\sum_{i = 1}^m A^{(i)}$ scales at most with $O(\sqrt{m})$, the success probability can be improved to $P\approx 1-\epsilon/\sqrt{m}$, which is for some $m$ larger than $1-\delta$. Then, the effective size reads $N_{\mathrm{eff}}^{\mathrm{K}} = n/m = O(N)$.

The implication $N_{\mathrm{eff}}^{\mathrm{K}}  =O(N) \Rightarrow N_{\mathrm{eff}}^{\mathrm{B}} = O(N)$ is in general not true (see example \ref{Ex:quantum-classical}).

\ref{item:3}If $N_{\mathrm{eff}}^{\mathrm{K}} = O(N)$, then we find $n=O(N)$ distinct groups of $O(1)$ qubits, such that for every group $i$ the ranges of the reduced density operators $\rho_k^i = \mathrm{Tr}_{N\setminus i} \left| \psi_k \rangle\!\langle \psi_k\right| $ ($k=0,1$) are almost orthogonal, i.e., $\frac{1}{2}\lVert \rho_0^i-\rho_1^i \rVert_1 = 1- 2\delta$ with $\delta \ll 1$. Therefore, one has to apply to every group $i$ a nontrivial rotation. In average, this corresponds to $O(1)$ one-particle operations. Since this has to be done for all groups simultaneously, we need in total $O(N)$ one-particle operations to transform $\left| \psi_0 \right\rangle $ to $\left| \psi_1 \right\rangle $ and hence we have $N_{\mathrm{eff}}^{\mathrm{M}} = O(N)$.

  On the other hand, $N_{\mathrm{eff}}^{\mathrm{M}} = O(N)$ requires that we have to rotate non-trivially on $O(N)$ groups of $\left| \psi_0 \right\rangle $ to have a large overlap with $\left| \psi_1 \right\rangle $. This means that there are $O(N)$ groups where the ranges of $\rho_0^i$ and $\rho_1^i$ are almost orthogonal. This makes the two states locally distinguishable, hence $N_{\mathrm{eff}}^{\mathrm{K}} = O(N)$.

\ref{item:4}  Assuming $N_{\mathrm{eff}}^{\mathrm{K}} = O(N)$, there exists a $A = \sum_{i=1}^n A^{(i)}\in \mathcal{A}$ such that $\left| \langle A \rangle_{\psi_0} - \langle A \rangle_{\psi_1} \right| = O(N)$ and $\mathcal{V}(\psi_i)(A) = O(N)$. As in part \ref{item:2}, we conclude that $\mathcal{V}(\psi_i)(A) = O(N^2)$, i.e., $N_{\mathrm{eff}}^{\mathrm{F}}(\psi) = N_{\mathrm{eff}}^{\mathrm{S}}(\psi) = O(N)$.

\end{proof}

\begin{table}[htb]
\begin{tabular}{l   c @{\quad} c@{\quad} c @{\quad} c}  \hline  \hline
  & $ N_{\mathrm{eff}}^{\mathrm{rF}} $
  & $N_{\mathrm{eff}}^{\mathrm{K}}$ 
  & $N_{\mathrm{eff}}^{\mathrm{M}}$ 
  & $N_{\mathrm{eff}}^{\mathrm{F}}$
  \\
  \hline
  GHZ (Ex.~\ref{ex:ghz})& $N$& $N$& $N$& $N$\\
  gen. GHZ (Ex.~\ref{ex:generalized-ghz})& $\approx N\sin^2
\epsilon + \cos^2 \epsilon$& $\approx N \frac{ \log(\cos\epsilon)}{-\log(\delta)}$& $N \sin^2 \epsilon$& $ =N_{\mathrm{eff}}^{\mathrm{rF}} $ \\ 
Cluster-GHZ (Ex.~\ref{ex:cluster-ghz})& $N/3$& $N/3$& $N/3$& $N/3$\\
\hline
PS+Domain Wall (Ex.~\ref{ex:ps-domainwall})& $O(1)$& ? & ? & $O(N)$\\
Quantum/Classical (Ex.~\ref{Ex:quantum-classical})& $O(1)$& $O(N)$& $O(N)$& $O(N)$\\
\hline  
Cloned superposition (Ex.~\ref{Ex:Dicke})& $O(1)$& $O(1)$& $O(1)$& $O(N)$\\
\hline  \hline\end{tabular}
\caption[]{\label{tab:summaryExamples} Overview on the examples of this paper for the discussed measures [from equations (\ref{eq:23}), (\ref{eq:29}), (\ref{eq:25}) and (\ref{eq:19}), respectively; note that $N_{\mathrm{eff}}^{\mathrm{B}} \approx N_{\mathrm{eff}}^{\mathrm{rF}}$ and $N_{\mathrm{eff}}^{\mathrm{S}}=N_{\mathrm{eff}}^{\mathrm{F}}$]. The examples are divided into three blocks. The first block are examples where all measures qualify the quantum states as macroscopic (with different absolute values). The second block of examples are microscopic due to Bj\"ork and Mana and the relative Fisher information, but macroscopic (or indefinite, see sections \ref{sec:local-dist} and \ref{sec:orth-with-resp}) for the other measures. The last example (see also \cite{ClonedCat}) is a macroscopic quantum state but not a macroscopic superposition. These three blocks correspond to the structure illustrated in figure \ref{fig:Hierarchy}.}
\end{table}

\section{Conclusion}
\label{sec:conclusion}

To summarise, we have introduced the concept of the quantum Fisher information as a measure for macroscopic quantum states in multipartite qubit systems. This definition has been motivated by the search for quantum states that potentially show the validity of quantum mechanics on a macroscopic scale under ``realistic'' conditions, which resulted in the restriction to local Hamiltonians for the time evolution and local measurements. We have seen that a high quantum Fisher information allows a rapid oscillation pattern for certain observables which cannot be reproduced by classical states. Furthermore, this kind of macroscopic quantum states are able to serve as a resource for high precision measurements.

This measure is well defined for generally mixed states and does not require a special shape of the state, for example, a superposition of two orthogonal quantum states. However, a slight modification allows us to judge on the ``catness'' of macroscopic superpositions. This variation was subsequently called relative Fisher information. With this proposal, we circumvent some ambiguities that other measures for macroscopic superpositions show.

We have contrasted our proposals to other measures for macroscopic quantum states suitable for qubit systems. We have suggested to distinguish carefully between ``macroscopic quantum states'' and ``macroscopic superpositions''. We have found that among those measures, a hierarchy can be established that classifies three different kinds of macroscopic quantum states: superpositions of orthogonal, semi-classical states; superpositions of \textit{a priori} general quantum states that are locally orthogonal; and general quantum states that exhibits long-range quantum correlations. Interestingly, it has turned out that macroscopic superpositions are a strict subset of general macroscopic quantum states.

This paper has focused on qubit systems. Although not presented explicitly, it is clear that one can generalise our findings to arbitrary spin systems. In future work, it will be interesting to investigate to role of the quantum Fisher information for macroscopicity in other frameworks like continuous variable systems.

\ack 
The authors thank Oriol Romero-Isart, Anika Pflanzer and Johannes Kofler for useful discussions and Tomoyuki Morimae for comments on the manuscript and advising us of references \cite{Morimae09} and \cite{Morimae10}. The research was funded by the Austrian Science Fund (FWF): P20748-N16, P24273-N16, SFB F40-FoQus F4012-N16 and the European Union (NAMEQUAM).

  % ---------------------------------------------------
  % ---------------------------------------------------

\section*{References}
\label{sec:references}

\bibliographystyle{iopart-num}
\bibliography{MasseBiblio}
\end{document}